\theoremstyle{plain}
\newtheorem{theorem}{Theorem}[section]
\newtheorem{lemma}[theorem]{Lemma}
\theoremstyle{definition}
\theoremstyle{remark}
\newtheorem{remark}[theorem]{Remark}
\title{On the Optimal Bounds  for Noisy  Computing}
\newcommand{\orn}{\ensuremath{\mathsf{OR}} }
\newcommand{\maxn}{\ensuremath{\mathsf{MAX}} }
\newcommand{\sort}{\ensuremath{\mathsf{SORT}} }
\DeclareMathOperator*{\argmax}{arg\,max}
\DeclareMathOperator*{\ind}{\mathbbm{1}}
\theoremstyle{plain}
\definecolor{cm}{RGB}{0,0,200}
\definecolor{purple}{RGB}{200,0,200}
\theoremstyle{plain}
\theoremstyle{definition}
\definecolor{cm}{RGB}{0,0,200}
\definecolor{purple}{RGB}{200,0,200}
\BODY\end{matrix}$}%
\BODY\end{align}}}
  \noindent\resizebox{\hsize}{!}{\usebox2}%
 \title{ On the Optimal Bounds for Noisy Computing}
\author{Banghua Zhu,
\and Ziao Wang,
\and Nadim Ghaddar,
\and Jiantao Jiao,
\and Lele Wang
\thanks{Banghua Zhu is with the Department of Electrical Engineering and Computer Sciences, University of California, Berkeley, Berkeley, CA 94720, USA (email: banghua@berkeley.edu).}
\thanks{Ziao Wang is with the Department of Electrical and Computer Engineering, University of British Columbia, Vancouver, BC V6T1Z4, Canada (email: ziaow@ece.ubc.ca).}
\thanks{Nadim Ghaddar is with the Department of Electrical and Computer Engineering, University of California San Diego, La Jolla, CA 92093, USA, (email: nghaddar@ucsd.edu).}
\thanks{Jiantao Jiao is with the Department of Electrical Engineering and Computer Sciences, University of California, Berkeley, Berkeley, CA 94720, USA (email: jiantao@berkeley.edu).}
\thanks{Lele Wang is with the Department of Electrical and Computer Engineering, University of British Columbia, Vancouver, BC V6T1Z4, Canada (email: lelewang@ece.ubc.ca).}
\thanks{A shorter version of this work is accepted at the 2023 IEEE International Symposium on Information Theory.}}
\begin{document}
\maketitle




  
    

    
  
\begin{abstract}
  We revisit the problem of computing with noisy information considered in~\citet{feige1994computing}, which includes computing the \orn function from noisy queries, and computing the $\mathsf{MAX}, \mathsf{SEARCH}$, and $\mathsf{SORT}$ functions from noisy pairwise comparisons. For $K$ given elements, the goal is to correctly recover the desired function with probability at least $1-\delta$ when  the outcome of each query is flipped with probability $p$. We consider both the adaptive sampling setting where   
  each query can be adaptively designed based on past outcomes, and the non-adaptive sampling setting where the query cannot depend on past outcomes. The prior work provides tight  bounds on the worst-case query complexity in terms of the dependence on  $K$. 
  However, the upper and lower bounds do not match in terms of the dependence on $\delta$ and $p$. We improve the lower bounds for all the four functions under both adaptive and non-adaptive query models. Most of our lower bounds match the upper bounds up to constant factors when either $p$ or $\delta$ is bounded away from $0$, while the ratio between the best prior upper and lower bounds goes to infinity when $p\rightarrow 0$ or $p\rightarrow 1/2$. On the other hand, we also provide matching upper and lower bounds for the number of queries in expectation, improving both the upper and lower bounds  for the variable-length query model. 
\end{abstract}

\section{Introduction}

The problem of computing with noisy information  has been studied extensively since the seminal work~\citep{feige1994computing}, which considers four problems: 
\begin{itemize}
    \item Computing the \orn function  of $K$ bits  from noisy  observations of the  bits;
    \item Finding the largest (or top-$N$) element among $K$ real-valued elements from noisy pairwise comparisons;
    \item  Searching the rank of a new element in an ordered list of $K$ elements from noisy  pairwise  comparisons;
    \item  Sorting $K$ elements from noisy  pairwise  comparisons.
\end{itemize} 
\citet{feige1994computing} is based on a simple noise model where each observation goes through a binary symmetric channel $\mathsf{BSC}(p)$, i.e. for each observation, with probability  $p$ we  see its flipped outcome, and with probability $1-p$ we  see its true value. They provide upper and lower bounds for the query complexity in terms of the total number of elements $K$, the noise probability $p$, and the desired confidence level $\delta$ when adaptive querying is allowed. They  establish the optimal query complexity in terms of dependence with respect to $K$. However, the exact sample/query complexity with respect to all parameters $K, \delta$, and $p$ is still not fully understood.  
In this paper, we revisit the problem of 
computing  under noisy observations  in both the adaptive sampling and non-adaptive sampling  settings. We aim to close the gap in the existing bounds and illustrate the  difference in query complexity between adaptive sampling and non-adaptive sampling.

Taking the problem of   computing the \orn function of $K$ bits as an example, assume that there are $K$ bits $(X_1, \cdots, X_K)\in\{0, 1\}^K$. The \orn function   is defined as
\begin{align}\label{eqn:def_or}
    \mathsf{OR}(X_1,\cdots, X_K) = \begin{cases}1, & \text{ if } \exists k\in[K], X_k = 1 \\ 
    0, & \text{ otherwise.}
    \end{cases}
\end{align}

The question is simple when we can query each bit noiselessly. In this case, $K$ queries are both sufficient and necessary since it suffices to query each bit once. And thus there is no benefit in applying adaptive querying compared to non-adaptive querying.
When the observation of each query goes through a binary symmetric channel $\mathsf{BSC}(p)$,  we  ask two questions:
\begin{itemize}
    \item How many queries (samples) do we need  in the worst case to recover the true \orn function value of any given sequences $X_1,\cdots, X_K$ with probability at least $1-\delta$? 
    \item Can adaptive sampling do better than non-adaptive sampling when noise is present? 
\end{itemize}
\renewcommand{\arraystretch}{2}
\begin{table}[!htbp]
    \centering
\scalebox{1.1}{ 
    \begin{tabular}[!htbp]{|c|c|c|}
      \hline
      \multirow{2}{*}{\textbf{Problem}}   & \multicolumn{2}{|c|}{\textbf{Fixed Length, Adaptive Sampling}} \\\cline{2-3}
         & Upper Bound & Lower Bound 
         \\
         \hline 
         \orn & $\mathcal{O}(\frac{K\log(1/\delta)}{1-H(p)})$  \citep{feige1994computing} & $\Omega(\frac{K}{1-H(p)}+ \frac{K\log(1/\delta)}{D_{\mathsf{KL}}(p\|1-p)})$ (Thm~\ref{thm:or_adaptive_lower}) \\
         \maxn & $\mathcal{O}(\frac{K\log(1/\delta)}{1-H(p)})$ \citep{feige1994computing} & $\Omega(\frac{K}{1-H(p)}+ \frac{K\log(1/\delta)}{D_{\mathsf{KL}}(p\|1-p)})$ (Thm~\ref{thm:max_adaptive_lower}) \\
         $\mathsf{SEARCH}$ & $\mathcal{O}(\frac{\log(K/\delta)}{1-H(p)})$ \citep{feige1994computing} & $\Omega(\frac{\log(K)}{1-H(p)}+\frac{\log(1/\delta)}{D_{\mathsf{KL}}(p\|1-p)})$ (Thm~\ref{thm:search_adaptive_lower}) \\
         \sort & $\mathcal{O}(\frac{K\log(K/\delta)}{1-H(p)})$ \citep{feige1994computing} & $\Omega( \frac{K\log(K)}{1-H(p)}+\frac{K\log(K/\delta)}{D_{\mathsf{KL}}(p\|1-p)})$ (Thm~\ref{thm:sort_adaptive_lower}) \\
         \hline 
               \multirow{2}{*}{\textbf{Problem}}  & \multicolumn{2}{|c|}{\textbf{Fixed Length, Non-adaptive Sampling}}  \\\cline{2-3}
         & Upper Bound (Appendix~\ref{app:non-adaptive_upper}) & Lower Bound  
         \\
         \hline 
         \orn & $\mathcal{O}(\frac{K\log(K/\delta)}{1-H(p)})$ & $\Omega(\max(K,\frac{K\log(K)p}{1-H(p)}, \frac{K\log(K)}{\log((1-p)/p)}))$ (Thm~\ref{thm:or_non-adaptive_lower})\\
         \maxn & $\mathcal{O}(\frac{K^2\log(K/\delta)}{1-H(p)})$ & $\Omega(\frac{K^2}{1-H(p)}+\frac{K^2\log(1/\delta)}{D_{\mathsf{KL}}(p\|1-p)})$ (Thm~\ref{thm:max_non-adaptive_lower})\\
         $\mathsf{SEARCH}$ & $\mathcal{O}(\frac{K\log(1/\delta)}{1-H(p)})$ & $\Omega(\frac{K}{1-H(p)}+\frac{K\log(1/\delta)}{D_{\mathsf{KL}}(p\|1-p)})$ (Thm~\ref{thm:search_non-adaptive_lower})\\
         \sort & $\mathcal{O}(\frac{K^2\log(K/\delta)}{1-H(p)})$ & $\Omega(K^2+\frac{K^2\log(K)}{D_{\mathsf{KL}}(p\|1-p))})$   (Thm~\ref{thm:sort_non-adaptive_lower})\\
         \hline 
             \multirow{2}{*}{\textbf{Problem}}   & \multicolumn{2}{|c|}{\textbf{Variable Length, Adaptive Sampling}} \\\cline{2-3}
         & \multicolumn{2}{|c|}{Matching Bound (Thm~\ref{thm:variable_length}) }
         \\
         \hline 
         \orn & \multicolumn{2}{|c|}{$\Theta(\frac{K}{1-H(p)}+ \frac{K\log(1/\delta)}{D_{\mathsf{KL}}(p\|1-p)})$ }\\
         \maxn & \multicolumn{2}{|c|}{$\Theta(\frac{K}{1-H(p)}+ \frac{K\log(1/\delta)}{D_{\mathsf{KL}}(p\|1-p)})$} \\
         $\mathsf{SEARCH}$ & \multicolumn{2}{|c|}{$\Theta(\frac{\log(K)}{1-H(p)}+\frac{\log(1/\delta)}{D_{\mathsf{KL}}(p\|1-p)})$ } \\
         \sort & \multicolumn{2}{|c|}{$\Theta( \frac{K\log(K)}{1-H(p)}+\frac{K\log(K/\delta)}{D_{\mathsf{KL}}(p\|1-p)})$ } \\ \hline
             \end{tabular}
}
  \caption{Summary of query complexity bounds 
  of  $\mathsf{OR}$, $\mathsf{MAX}$, $\mathsf{SEARCH}$ and $\mathsf{SORT}$. Here, we assume $\delta<0.49$. 
  }
    \label{tab:summary}
    \vspace{-1.3em}
    \end{table}

In~\citet{feige1994computing}, an adaptive tournament algorithm is proposed that achieves worst-case query complexity $\mathcal{O}(K\log(1/\delta)/(1-H(p)))$, and a corresponding lower bound $\Omega(K\log(1/\delta)/\log((1-p)/p))$ for adaptive sampling is provided. A simple calculation tells  us that their ratio $ \log((1-p)/p)/(1-H(p))$ goes to infinity as $p\rightarrow 0$ or $p\rightarrow 1/2$, which indicates that there is still a    gap between the upper and the lower bounds when the  noise probability $p$  is near the point $0$ or $1/2$. This calls for tighter upper or lower bound for these cases. In our paper, we improve the lower bound to $\Omega(K/(1-H(p))+K\log(1/\delta)/D_{\mathsf{KL}}(p\|1-p))$, which matches the existing upper bound up to a constant factor when either $p$ or $\delta$ is bounded away from $0$.   

One may wonder how many samples are needed when each query is not allowed to depend on previous outcomes. We provide a lower bound $\Omega(\max(K,K\log(K)p/(1-H(p))), K\log(K)/\log((1-p)/p))$ for this non-adaptive setting. On the other hand, a repetition-based upper bound $\mathcal{O}(K\log(K/\delta)/(1-H(p)))$  matches the lower bound up to a constant factor when both $p$ and $\delta$ are bounded  from $0$. 

Similarly, we ask the same questions for computing $\mathsf{MAX}$, $\mathsf{SEARCH}$, and $\mathsf{SORT}$. We defer the definitions and discussions  of the problems to Section~\ref{sec:preliminary}. Here
we summarize the best upper and lower bounds for the problems considered in this paper in Table~\ref{tab:summary}, either from previous art or from this paper. 

Our lower bounds take the form of $f(K)/(1-H(p))+g(K,\delta)/D_{\mathsf{KL}}(p\|1-p)$. Here the first term does not depend on $\delta$ for $\delta<0.49$, representing the number of queries one must pay regardless of the target error probability $\delta$. The second term  grows logarithmically with $1/\delta$, representing the price to pay for smaller target error probability. Here we always have $D_{\mathsf{KL}}(p\|1-p)\gtrsim 1-H(p)$ for $p\in(0, 1)$, with $D_{\mathsf{KL}}(p\|1-p)\asymp  1-H(p)$ when $p$ is bounded away from $0$. Technically, the first term is usually from Fano's inequality, which gives better dependence on $p$ but worse dependence on $\delta$. The second term is from a $\mathsf{KL}$-divergence based lower bound, which gives better dependence on $\delta$ but worse dependence on $p$. 

We also extend our bounds from the \emph{fixed length} query model to \emph{variable length} query model (a.k.a. fixed budget and fixed confidence in the bandit literature~\citep{kaufmann2016complexity}). In the fixed length setting considered above, we ask for the worst-case deterministic number of queries required in the worst case to recover the true value with probability at least $1-\delta$. In the variable length setting, the number of queries can be random and dependent on the past outcomes. And we ask for the expected number of queries to recover the true value with probability at least $1-\delta$. We  discuss the results for  variable length in Section~\ref{app:variable_length}, where we give  matching upper and lower bounds with respect to all parameters for computing all four functions, improving over both existing upper and lower bounds and closing the gap. 


\subsection{Related Work}\label{app:related}
The problems of noisy computation have been studied extensively before and after \citep{feige1994computing}. However, most of the existing research work focuses on tightening the dependence on $K$ instead of $p$ and $\delta$,  or extending the  results to a more general framework where the  noise follows a generalized model that includes $\mathsf{BSC}$ channel as a special case. Although worst-case upper and lower bounds are provided for the generalized model, most of the lower bounds are based on instances where the noise does not follow a $\mathsf{BSC}$ model. Thus the lower bounds do not apply to our case. 
\paragraph{Noisy binary searching}
The noisy searching problem was first introduced by R\'enyi~\citep{Renyi1961} and Ulam~\citep{Ulam1976}  and further developed by Berlekamp~\citep{Berlekamp1964} and Horstein~\citep{Horstein1963} in the context of coding for channels with feedback. The noisy searching algorithm in~\citep{Burnashev1974} by Burnashev and Zigangirov can be seen as a specialization of Horstein's coding scheme, whereas the algorithms in~\citep{feige1994computing,pelc1989searching, Karp2007} can be seen as an adaptation of the binary search algorithm to the noisy setting.


The first tight lower bound for variable-length adaptive sampling in noisy searching is given by~\citep{Burn1976}.  
The recent concurrent work~\citep{gu2023optimal} improves the dependence on constant when $p$ is some constant that is bounded away from $0$ and $1/2$. Our lower bounds are based on a different proof using Le Cam's method. The results do not require $p$ to be bounded, but are worse in terms of the constant dependence. \citep{gu2023optimal} also provides matching upper bound that is tight even with the constant in the variable-length setting. 
We provide more discussions in Section~\ref{app:variable_length}.  Making the upper and lower bounds match in the fixed-length query model  still remains an important open problem. 

In terms of the bounds for non-adaptive sampling,  the gap between $\mathcal{O}(\log(K))$ for adaptive sampling and $\mathcal{O}(K)$  can be seen from the noiseless case when $p=0$~\citep{Renyi1961}.  Here we provide an improved bound for the noisy case that has explicit dependence on $p$ and $\delta$.

\paragraph{Noisy Sorting and max selection}
The noisy sorting and max (or Top-$N$) selection problems have been usually studied together (e.g.,~\citep{feige1994computing}) and later have been extended to  a more general setting known as active ranking~\citep{Mohajer2017,Falahatgar2017,Shah2018,Heckel2019,Agarwal2017}, where the  noise $p_{ij}$ for the comparison of a pair of elements $i$ and $j$ is usually unknown and different for different pairs. 
Other related but different settings for noisy sorting in the literature include the cases
when some distance metric for permutations is to be minimized (rather than the the probability of error)~\citep{Ailon2008,Braverman2009,Ailon2011adv,Negahban2012,Wauthier2013,Rajkumar2014,Shah2016,Shah2018,Mao2018}, 
when the noise for each pairwise comparison is not \emph{i.i.d.} and is determined by some noise model (e.g. the Bradley--Terry--Luce model\citep{BTL1952})~\citep{Ajtai2009,Negahban2012,Rajkumar2014,Chen2015,Chen2017,ren2018}, or when the ordering itself is restricted to some subset of all permutations~\citep{Jamieson2011,Ailon-Begleiter2011}.

For noisy sorting, the best upper and lower bounds have been provided in~\citep{feige1994computing, isit2022paper}, which give an upper bound $\mathcal{O}(K\log(K/\delta)/(1-H(p)))$ and lower bound $\Omega(K\log(K)/(1-H(p)) + \log(1/\delta)/D_{\mathsf{KL}}(p\|1-p))$. We tighten the lower bound to be $\Omega(K\log(K)/(1-H(p)) + K\log(K/\delta)/D_{\mathsf{KL}}(p\|1-p))$. On the other hand, \citep{gu2023optimal} shows that the query complexity is $(1+o(1))(K\log(K)/(1-H(p)) + K\log(K)/D_{\mathsf{KL}}(p\|1-p))$, which does not scale with  $\delta$. Our lower bound for fixed-length improves the dependence on $\delta$. We also make the bounds match up to constant factors for all parameters in the variable-length setting. However, making  the $\delta$ dependence tight in upper bound for fixed-length remains an open problem. In terms of the non-adaptive sampling scenario, we provide an upper and lower bound that matches in terms of the dependence on $K$, but is still loose when both $p$ and $\delta$ go to $0$ simultaneously.

For max selection, the best known lower bound for fixed-length adaptive sampling is $\Omega(K\log(1/\delta)/\log((1-p)/p))$~\citep{feige1994computing}, while our result makes it tight when either $p$ or $\delta$ goes to $0$ and provides matching bounds for variable-length setting. On the other hand, \citep{Mohajer2017, Shah2018} discuss the gap between adaptive sampling and non-adaptive sampling. However, the $\Omega(K^3\log(K))$ lower bound for non-adaptive sampling in \citep{Shah2018} does not apply to our case since it is based on a generalized model where the noise probability is different. In our case, $\mathcal{O}(K^2\log(K))$ is a natural upper bound. However, it is unclear whether our lower bound  $\Omega(K^2)$ can be improved to  $\Omega(K^2\log(K))$.

\paragraph{\orn and best arm identification}
The noisy computation of \orn has been first studied in the literature of circuit with noisy gates~\citep{dobrushin1977lower, dobrushin1977upper, von1956probabilistic, pippenger1991lower, gacs1994lower} and noisy decision trees~\citep{feige1994computing, evans1998average, reischuk1991reliable}. Different from the other three problems we consider, computing \orn
 does not rely on pairwise comparisons, but instead directly queries the values of the bits. This is also related to the rich literature of best arm identification, which queries real-valued arms and aims to identify the arm with largest value (reward)~\citep{bubeck2009pure, audibert2010best, garivier2016optimal, jamieson2014best, gabillon2012best,  kaufmann2016complexity}. Indeed, any best-arm identification algorithm can be converted to an \orn computation by first finding the maximum and then query the binary value of the maximum. This recovers the best existing upper bound $\mathcal{O}(K\log(1/\delta)/(1-H(p)))$ for computation of \orn under adaptive sampling scenario~\citep{audibert2010best}. However, the lower bound for best arm identification does not apply to our case, since \orn has a binary output, while the best arm identification problem requires  the arm index. And our lower bound for fixed-length adaptive sampling improves over the best known lower bound $\Omega(K\log(1/\delta)/\log((1-p)/p))$ ~\citep{feige1994computing}. We also provide matching bounds for variable-length.

 For non-adaptive sampling, \citep{reischuk1991reliable, gacs1994lower} provide a lower bound $\Omega(K\log(K)/\log((1-p)/p))$. Our lower bound $\Omega(K\log(K)p/(1-H(p)))$ is tighter than the current lower bound when $p\rightarrow 1/2$, but looser when $p\rightarrow 0$. Thus the tightest bound is a maximum of $K$ and the two lower bounds.

\subsection{Problem Definition and Preliminaries}\label{sec:preliminary}

The \orn function is defined in Equation~\eqref{eqn:def_or}. We define the rest of the problems here. Different from $\mathsf{OR}$, the $\mathsf{MAX}$, $\mathsf{SEARCH}$ and $\mathsf{SORT}$ problems are all based on   noisy pairwise comparisons. Concretely, assume we have $K$ distinct real-valued items $X_1,\cdots, X_K$. Instead of querying the exact  value of each element, we can only query a pair of elements and observe their noisy comparison. For any queried    pair $(i,j)$, we will observe a sample from $\mathsf{Bern}(1-p)$ if $X_i>X_j$, and a sample from $ \mathsf{Bern}(p)$ if $X_i<X_j$.   

We have $\maxn(X_1,\cdots, X_K)=\argmax_{i\in[K]}X_i$, $\sort(X_1,\cdots, X_K) = \sigma$, where $\sigma:[K]\mapsto [K]$ is the permutation function such that $X_{\sigma(1)}< X_{\sigma(2)}< \cdots< X_{\sigma(K)}$. And  $\mathsf{SEARCH}(X; X_1,\cdots,X_K) =  i$, where $i$ satisfies that $X_i<X<X_{i+1}$ with $X_0=-\infty$ and $X_{K+1}=+\infty$. In the $\mathsf{SEARCH}$ problem, we assume that the ordering of $X_1,\cdots, X_K$ is given, and we are interested where the position of a new $X$ is. Thus, in each round we compare the given $X$ and any of the elements $X_i$.

We are interested in the probability of exact recovery of the functions. 
We consider both adaptive sampling and non-adaptive sampling. In adaptive sampling, the sampling distribution at each round can be dependent on the past queries and observations. In non-adaptive sampling, the sampling distribution in each round has  to be determined at the beginning and cannot change with the ongoing queries or observations. Throughout the paper, we assume that the desired error probability $\delta$ satisfies $\delta<0.49$. We use the terms ``querying'' and ``sampling'' interchangeably. 

\section{Computing the \orn function}

In this section,  we  provide the  lower bounds for the query complexity of computing the \orn function under both adaptive  and non-adaptive sampling. The upper bound for adaptive sampling is from~\citep{feige1994computing}. And the upper bound for non-adaptive sampling is omitted . Let $\theta\in\{0, 1\}^K$ be the $K$-bit sequence representing the true values. Let $\mathsf{OR}(\theta)$ be the result of the \orn function applied to the $K$-bit noiseless sequence. We also let $\hat \mu$ be any algorithm that queries any noisy bit in $T$ rounds, and outputs a (possibly random) decision from $\{0, 1\}$. 

\subsection{Adaptive Sampling}
 We have the following minimax lower bound. 
\begin{theorem}\label{thm:or_adaptive_lower} In the adaptive setting, we have
\begin{align*}
    \inf_{\hat\mu} \sup_{\theta\in\{0, 1\}^K} \mathbb{P}(\hat\mu \neq \mathsf{OR}(\theta))\geq \frac{1}{4}\cdot \exp\left(-\frac{T\cdot D_{\mathsf{KL}}(p\|1-p)}{K }\right).
\end{align*}
Thus, the number of queries required to recover the true value with probability at least  $1-\delta$ is lower bounded by $\Omega({K}/{(1-H(p))}+K\log(1/\delta)/D_{\mathsf{KL}}(p\|1-p))$. 
\end{theorem}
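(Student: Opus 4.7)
My plan is a two-point Le Cam argument combined with the adaptive KL chain rule that is standard in stochastic bandit lower bounds. Fix the null hypothesis $\theta_0 = \bm{0}_K$ (so $\mathsf{OR}(\theta_0) = 0$) together with the $K$ one-hot alternatives $\theta^{(i)} = e_i \in \{0,1\}^K$ (so $\mathsf{OR}(\theta^{(i)}) = 1$), and write $P^T_\theta$ for the joint law of the query--observation history $(A_1, X_1, \ldots, A_T, X_T)$ produced by the algorithm under parameter $\theta$. For each $i$, combining the standard two-point Le Cam inequality with the Bretagnolle--Huber bound $1 - \mathsf{TV}(P, Q) \ge \tfrac12\exp(-\mathrm{KL}(P\|Q))$ yields
\begin{align*}
\inf_{\hat\mu}\max\!\left\{P^T_{\theta_0}(\hat\mu\neq 0),\,P^T_{\theta^{(i)}}(\hat\mu\neq 1)\right\} \;\ge\; \tfrac12\bigl(1 - \mathsf{TV}(P^T_{\theta_0}, P^T_{\theta^{(i)}})\bigr) \;\ge\; \tfrac14\exp\!\bigl(-\mathrm{KL}(P^T_{\theta_0}\|P^T_{\theta^{(i)}})\bigr),
\end{align*}
so the remaining tasks are to bound this KL and to choose $i$ wisely.

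\textbf{KL computation and averaging.} Let $N_i$ be the (random) number of times coordinate $i$ is queried during the $T$ rounds. Under $\theta_0$ every queried bit is observed through $\mathrm{Bern}(p)$, while under $\theta^{(i)}$ the observation at coordinate $j \neq i$ is still $\mathrm{Bern}(p)$ but coordinate $i$ is observed through $\mathrm{Bern}(1-p)$. Since the algorithm's sampling rule is fixed, the conditional distribution of the next query given the past history is identical under the two hypotheses, so the standard adaptive KL chain rule gives
\begin{align*}
\mathrm{KL}\bigl(P^T_{\theta_0}\,\big\|\,P^T_{\theta^{(i)}}\bigr) \;=\; \mathbb{E}_{\theta_0}[N_i] \cdot D_{\mathsf{KL}}(p\|1-p).
\end{align*}
Because $\sum_{i=1}^K \mathbb{E}_{\theta_0}[N_i] \le T$, pigeonhole produces some $i^\star \in [K]$ with $\mathbb{E}_{\theta_0}[N_{i^\star}] \le T/K$. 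Plugging this bound back into the first display with $i = i^\star$ yields the claimed minimax lower bound $\tfrac14\exp(-T\, D_{\mathsf{KL}}(p\|1-p)/K)$. Inverting, any algorithm achieving maximum error at most $\delta$ needs $T \ge K\log(1/(4\delta))/D_{\mathsf{KL}}(p\|1-p)$ queries, which recovers the $K\log(1/\delta)/D_{\mathsf{KL}}(p\|1-p)$ summand of the sample-complexity corollary once $\delta$ is bounded away from $1/4$.

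\textbf{The $K/(1-H(p))$ term and the main obstacle.} The two-point bound alone cannot deliver the additive $K/(1-H(p))$ term, because the exponential lower bound saturates as $\delta \to 1/2$; that term should be obtained by a complementary Fano-style argument that controls $I(\theta; X^T) \le T(1-H(p))$ over a suitable multi-hypothesis class and combines with the trivial noiseless lower bound $T \ge K$, added to the two-point contribution. The only genuinely delicate step above is the adaptive KL chain-rule identity: one must write the joint law of $(A_1, X_1, \ldots, A_T, X_T)$ carefully, verify that $A_t \mid A_{1:t-1}, X_{1:t-1}$ has the same conditional distribution under both hypotheses so that its contribution cancels, and show that the only per-round contribution to the KL is $D_{\mathsf{KL}}(p\|1-p)\,\mathbbm{1}\{A_t = i\}$. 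This identity is textbook (e.g.\ Lemma~15.1 of Lattimore--Szepesv{\'a}ri), but the bookkeeping is the main source of error; everything else is a routine invocation of well-known inequalities.
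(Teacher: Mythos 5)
Your proof of the displayed minimax inequality is correct and is essentially the paper's own argument: Le Cam's two-point method with $\theta_0=\bm{0}$ against the one-hot alternatives, the Bretagnolle--Huber bound $1-\mathsf{TV}\ge\tfrac12 e^{-\mathrm{KL}}$, the divergence decomposition identity $\mathrm{KL}(P^T_{\theta_0}\|P^T_{\theta^{(i)}})=\mathbb{E}_{\theta_0}[N_i]\,D_{\mathsf{KL}}(p\|1-p)$ for adaptive sampling, and the pigeonhole choice of a coordinate with $\mathbb{E}_{\theta_0}[N_{i^\star}]\le T/K$. The step you flag as delicate (the adaptive chain rule) is exactly the lemma the paper invokes, so nothing is missing there.

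The gap is in the second sentence of the theorem. You leave the $K/(1-H(p))$ summand to a ``complementary Fano-style argument'' over a multi-hypothesis class with $I(\theta;X^T)\le T(1-H(p))$, but for $\mathsf{OR}$ this route does not directly work: the function is binary-valued, so any family of hypotheses contains at most two distinct answers, and a many-hypothesis Fano reduction does not lower bound the error of computing $\mathsf{OR}$ (this is precisely why the paper notes that best-arm-identification lower bounds do not transfer to $\mathsf{OR}$). No Fano argument is needed or used in the paper. Instead, the paper takes the trivial bound $T\ge K$ (each bit must be queried at least once when $\delta<1/2$) and observes that, for $\delta<0.49$,
\begin{align*}
K+\frac{K\log(1/\delta)}{D_{\mathsf{KL}}(p\|1-p)}\;\asymp\;\frac{K}{1-H(p)}+\frac{K\log(1/\delta)}{D_{\mathsf{KL}}(p\|1-p)},
\end{align*}
because $1-H(p)\asymp D_{\mathsf{KL}}(p\|1-p)$ when $p$ is bounded away from $0$ (so the first term on the right is absorbed by the second), while $1-H(p)\asymp 1$ as $p\to 0$ (so it is absorbed by $K$). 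Adding this one observation to your write-up closes the gap; as written, the sample-complexity corollary is only partially justified.
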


We provide the proof here, which is based on Le Cam's two point method (see e.g. \citep{lecam1973convergence, yu1997assouad}). 
\begin{proof}[Proof of Theorem~\ref{thm:or_adaptive_lower}]
Our lower bound proof is mainly based on Le Cam's two point method, which is also re-stated in  Lemma~\ref{lem:lecam} in Appendix~\ref{app:lemma} for reader's convenience. 
Let $\theta_0$ {be the length-$K$ all-zero sequence, and let $\theta_j \in \{0,1\}^K$ be such that $\theta_{jj} = 1$ and $\theta_{ji} = 0$ for $i \neq j$}. Here $\theta_{ji}$ refers to the $i$-th element in the binary vector $\theta_j$. We can first verify that for any $\hat\mu$, one has
\begin{align*}
    \mathds{1}(\hat\mu \neq \mathsf{OR}(\theta_0)) +  \mathds{1}(\hat\mu \neq \mathsf{OR}(\theta_j)) \geq 1.
\end{align*}
By applying Le Cam's two point lemma on $\theta_0$ and $\theta_j$, we know that
\begin{align*}
      \inf_{\hat\mu} \sup_{\theta\in\{0, 1\}^K} \mathbb{P}(\hat\mu \neq \mathsf{OR}(\theta)) & \geq \frac{1}{2}(1-\mathsf{TV}(\mathbb{P}_{\theta_0}, \mathbb{P}_{\theta_j})). 
\end{align*}
Here $\mathbb{P}_{\theta_j}$ is the joint distribution of query-observation pairs in $T$ rounds when the ground truth is $\theta_j$. By taking maximum over $j$ on the right-hand side, we have
\begin{align*}
     & \inf_{\hat\mu} \sup_{\theta\in\{0, 1\}^K} \mathbb{P}(\hat\mu \neq \mathsf{OR}(\theta))  \\
     & \geq \sup_{1\leq j\leq K}\frac{1}{2}(1-\mathsf{TV}(\mathbb{P}_{\theta_0}, \mathbb{P}_{\theta_j}))  \\ 
      & \geq \sup_{1\leq j\leq K} \frac{1}{4}\exp(-D_{\mathsf{KL}}(\mathbb{P}_{\theta_0}, \mathbb{P}_{\theta_j})).
\end{align*}
Here the last inequality is due to Bretagnolle–Huber inequality~\citep{bretagnolle79} (Lemma~\ref{lem:bh} in Appendix~\ref{app:lemma}). Now we aim at computing $D_{\mathsf{KL}}(\mathbb{P}_{\theta_0}, \mathbb{P}_{\theta_j})$. Let $T_j$ be the random variable that denotes the number of times the $j$-th element is queried among all $T$ rounds.  From divergence decomposition lemma~\citep{auer1995gambling} (Lemma~\ref{lem:div} in Appendix~\ref{app:lemma}), we have  
\begin{align*}
   D_{\mathsf{KL}}(\mathbb{P}_{\theta_0}, \mathbb{P}_{\theta_j}) = \mathbb{E}_{\theta_0}[T_j] \cdot D_{\mathsf{KL}}(p\|1-p).
\end{align*}
Here $\mathbb{E}_{\theta_0}[T_j]$ denotes the expected number of times the $j$-th element is queried when the ground truth is $\theta_0$.
Thus we have
\begin{align*}
     & \inf_{\hat\mu} \sup_{\theta\in\{0, 1\}^K} \mathbb{P}(\hat\mu \neq \mathsf{OR}(\theta))  \\
      & \geq \sup_{1\leq j\leq K} \frac{1}{4}\exp(-\mathbb{E}_{\theta_0}[T_j] \cdot D_{\mathsf{KL}}(p\|1-p)).
\end{align*}

Now since $\sum_j \mathbb{E}_{\theta_0}[T_j] = T$, there must exists some $j$ such that $ \mathbb{E}_{\theta_0}[T_j]  \leq T/K$. This gives
\begin{align*}
      \inf_{\hat\mu} \sup_{\theta\in\{0, 1\}^K} \mathbb{P}(\hat\mu \neq \mathsf{OR}(\theta)) &  \geq  \frac{1}{4}\cdot \exp\left(-\frac{T\cdot D_{\mathsf{KL}}(p\|1-p)}{K }\right).
\end{align*}

 On the other hand, $K$ is naturally a lower bound for query complexity since one has to query each element at least once. Thus we arrive at a lower bound of $\Omega(K+K\log(1/\delta)/D_{\mathsf{KL}}(p\|1-p))$. Note that this is equivalent to $\Omega(K/(1-H(p))+K\log(1/\delta)/D_{\mathsf{KL}}(p\|1-p))$ up to a constant factor when $\delta<0.49$. The reason is that when $p$ is bounded away from $0$,  $(1-H(p))/D_{\mathsf{KL}}(p\|1-p)$ is always some constant. When $p$ is close to $0$, $1-H(p)$ is within constant factor of $1$. 
\end{proof}

\begin{remark}\label{rmk:or}

Compared with the existing tightest bound $\Omega(K\log(1/\delta)/\log((1-p)/p))$ in~\citet{feige1994computing}, the rate is greatly improved as $p\rightarrow 0$ or $p\rightarrow 1/2$. 

On the other hand, the best known upper bound from~\citet{feige1994computing}, which is $\mathcal{O}(\frac{K\log(1/\delta)}{1-H(p)})$. We include its algorithm and analysis in Appendix~\ref{app:upper_or}. 
Theorem~\ref{thm:or_adaptive_lower}
shows that  when $\delta$ is bounded away from $0$, one needs at least $ C
\cdot K/(1-H(p))$ samples, matching the upper bound. 
Similarly, when $p$ is bounded away from $0$, the term $D_{\mathsf{KL}}(p\|1-p)$ is also within a constant factor of $1-H(p)$, thus the upper and lower bounds match.  The only regime where the upper and lower bounds do not match is the case when both $p$ and $\delta$ go to $0$. We conjecture that a better upper bound is needed in this case.
\end{remark}

\subsection{Non-adaptive Sampling}

In the case of non-adaptive sampling, 
we show that $\mathcal{O}(K)$ queries are not enough. And one needs $\Omega(K\log(K))$ queries.   

\begin{theorem}\label{thm:or_non-adaptive_lower}
In the non-adaptive sampling setting, where  the sampling procedure is restricted to taking independent samples from 
a sequence of distributions $p_1,\cdots, p_T$, we have
\begin{align*}
& \inf_{\hat\mu} \sup_{\theta\in\{0, 1\}^K} \mathbb{P}(\hat\mu \neq \mathsf{OR}(\theta))\\
& \quad \geq \tfrac{1}{2}\cdot \Big(1-\sqrt{\tfrac{1}{2K}\Big(\left(1+\tfrac{(1-2p)^2}{K(1-p)p}\right)^{T}-1\Big)}\Big).
\end{align*}
This shows that  the query complexity is at least $\Omega(\max(K,K\log(K)p/(1-H(p))))$. 
\end{theorem}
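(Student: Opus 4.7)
The plan is to invoke Le Cam's method with a composite alternative given by a uniform mixture. Let $\theta_0 = \bm{0}$ and, for each $j \in [K]$, let $\theta_j$ denote the indicator vector with a single $1$ in position $j$; these satisfy $\mathsf{OR}(\theta_0) = 0$ and $\mathsf{OR}(\theta_j) = 1$. Setting $\bar{\mathbb{P}} = \tfrac{1}{K}\sum_{j=1}^K \mathbb{P}_{\theta_j}$, Le Cam's inequality gives
\begin{align*}
\inf_{\hat\mu}\sup_\theta \mathbb{P}(\hat\mu \neq \mathsf{OR}(\theta)) \;\geq\; \tfrac{1}{2}\bigl(1 - \mathsf{TV}(\mathbb{P}_{\theta_0}, \bar{\mathbb{P}})\bigr),
\end{align*}
so it suffices to upper bound $\mathsf{TV}(\mathbb{P}_{\theta_0}, \bar{\mathbb{P}})$, which I would do through the standard inequality $\mathsf{TV} \leq \sqrt{\chi^2/2}$.

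The second step is to compute $\chi^2(\bar{\mathbb{P}}, \mathbb{P}_{\theta_0})$ by exploiting the product structure of non-adaptive sampling. Expanding the square of the likelihood ratio yields $1 + \chi^2 = \tfrac{1}{K^2}\sum_{j,j'}\mathbb{E}_{\theta_0}[r_j r_{j'}]$ with $r_j = d\mathbb{P}_{\theta_j}/d\mathbb{P}_{\theta_0}$. Independence across rounds makes each cross moment factorize over $t$. For $j \neq j'$, the per-round likelihood ratios have mean one under $\mathbb{P}_{\theta_0}$ and, within any single round, at most one of the two ratios differs from unity (since the round's query hits at most one coordinate), forcing $\mathbb{E}_{\theta_0}[r_j r_{j'}] = 1$. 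For $j = j'$, a direct per-round computation using the chi-squared divergence between $\mathsf{Bern}(1-p)$ and $\mathsf{Bern}(p)$, which equals $c := (1-2p)^2/(p(1-p))$, yields $\mathbb{E}_{\theta_0}[r_j^2] = \prod_{t=1}^T(1+p_t(j)c)$. Collecting these,
\begin{align*}
\chi^2(\bar{\mathbb{P}}, \mathbb{P}_{\theta_0}) = \tfrac{1}{K^2}\sum_{j=1}^K \Bigl(\prod_{t=1}^T(1+p_t(j)c) - 1\Bigr).
\end{align*}

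To obtain a bound free of the $p_t$'s, I would reduce to the uniform per-round marginal $p_t(j) = 1/K$ via symmetrization: both $\mathsf{OR}$ and the uniform prior over $\{\theta_j\}_{j=1}^K$ are permutation-invariant, so randomizing the coordinate labels by a uniformly random permutation of $[K]$ does not worsen the worst-case error of any non-adaptive procedure, while forcing every per-round query marginal to be uniform. Substituting $p_t(j) = 1/K$ gives $\chi^2 = \tfrac{1}{K}\bigl((1+c/K)^T - 1\bigr)$, and $\mathsf{TV} \leq \sqrt{\chi^2/2}$ produces exactly the stated bound. To extract the query complexity, I would demand that the right-hand side stay bounded away from $1/2$, which forces $(1+c/K)^T \gtrsim K$ and hence $T \gtrsim K\log(K)/c$. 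Substituting $c = (1-2p)^2/(p(1-p))$ and using $1 - H(p) \asymp (1-2p)^2/(p(1-p))$ as $p \to 1/2$ (and $1-H(p) \asymp 1$ when $p$ is bounded away from $1/2$) converts this to $\Omega\bigl(K\log(K)\,p/(1-H(p))\bigr)$, while the elementary observation that any coordinate never sampled cannot be distinguished from the all-zero hypothesis supplies the $K$ term in the outer maximum.

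The main obstacle I foresee is making the symmetrization step genuinely rigorous: a permutation-symmetrized non-adaptive procedure has uniform per-round query marginals, but the joint distribution of queries across rounds is coupled through the shared permutation rather than i.i.d.\ uniform. I would need to verify that the chi-squared computation above depends on the $p_t$'s only through their per-round marginals (so that the $p_t(j) = 1/K$ substitution applies unchanged to the symmetrized procedure), or else establish the simplex inequality $\tfrac{1}{K}\sum_j\prod_t(1+p_t(j)c) - 1 \leq (1+c/K)^T - 1$ directly; a naive Jensen application to the concave $\log$ controls only the geometric mean rather than the arithmetic one, so this step requires a more careful convexity argument.
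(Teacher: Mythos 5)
Your overall framework (point-vs-mixture Le Cam, $\mathsf{TV}\leq\sqrt{\chi^2/2}$, and the per-round factorization giving $\chi^2=\tfrac{1}{K^2}\sum_j\bigl(\prod_t(1+p_t(j)c)-1\bigr)$ for the uniform mixture, with $c=(1-2p)^2/(p(1-p))$) matches the paper up to the choice of mixture, and your cross-term argument for $j\neq j'$ is correct. The gap is exactly the step you flagged, and it is not repairable as proposed. The symmetrization route fails because the permutation-averaged procedure is no longer a product of per-round query distributions: all rounds share the same random permutation, so the joint law of the query-observation sequence is a mixture of products, the factorized $\chi^2$ formula no longer applies, and you cannot substitute the uniform per-round marginals into it. Your fallback simplex inequality is simply false: take $K=2$, $T=2$, and $p_1=p_2=(1,0)$; then $\tfrac{1}{K}\sum_j\prod_t(1+p_t(j)c)=1+c+\tfrac{c^2}{2}>(1+\tfrac{c}{2})^2$, and more generally, when the allocation concentrates on one coordinate, the uniform-mixture $\chi^2$ grows like $\tfrac{1}{K^2}(1+c)^T$, which is exponentially larger than the target $\tfrac{1}{K}\bigl((1+\tfrac{c}{K})^T-1\bigr)$ even though the actual total variation stays small (the $\chi^2$ bound is dominated by the single well-queried component). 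So the uniform mixture cannot yield the stated bound for arbitrary non-adaptive allocations.

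The paper avoids this by choosing the mixture weights adaptively to the allocation rather than uniformly: with $T_j=\sum_t \pi_j^t$, it first bounds $\prod_t(1+\pi_j^t c)\leq(1+\tfrac{T_j c}{T})^T$ by Jensen (your geometric-mean concern is immaterial here because this direction of AM--GM is what is needed), then sets $p_j\propto\bigl((1+\tfrac{T_j c}{T})^T-1\bigr)^{-1/2}$, which down-weights heavily queried coordinates and gives $\chi^2\leq K/\bigl(\sum_j((1+\tfrac{T_j c}{T})^T-1)^{-1/2}\bigr)^2$; a second convexity argument using $\sum_j T_j=T$ then lower bounds the normalizing sum by $K\bigl((1+\tfrac{c}{K})^T-1\bigr)^{-1/2}$, yielding the claimed inequality. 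Your extraction of the $\Omega(\max(K,K\log(K)p/(1-H(p))))$ query complexity from the final display is fine, but to complete the proof you need to replace the uniform mixture and symmetrization with an allocation-dependent mixture of this kind (or an equivalent device).
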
 
We provide the proof below. Different from the case of adaptive sampling, for non-adaptive sampling we target for a rate of $\Omega(K\log(K))$. And a standard Le Cam's two point method is not sufficient to give the extra logarithmic factor. Thus we provide a new proof  based on a point versus mixture extension of Le Cam's method.

\begin{proof}[Proof of Theorem
~\ref{thm:or_non-adaptive_lower}]
Consider the instance $0$ which has all $0$ as its elements. For instance $1$, we define it as a  distribution $q$ over $K$ instances $1_k$, $k\in[K]$ which puts probability  $p_k$ on the $k$-th element. We will determine the value of $p_k$  later. Here instance  $1_k$ refers  to the case when $k$-th element is $1$, and the rest elements are $0$.  Now from Le Cam's two point lemma, we have
\begin{align*}
      & \inf_{\hat\mu} \sup_{\theta\in\{0, 1\}^K} \mathbb{P}(\hat\mu \neq \mathsf{OR}(\theta))
      \\
      & \geq \frac{1}{2}(1-\mathsf{TV}(\mathbb{P}_{1}, \mathbb{P}_{0}))  \\
      & = \frac{1}{2}(1-\mathsf{TV}(\mathbb{E}_{j\sim q}[\mathbb{P}_{1_j}], \mathbb{P}_{0})) \\ 
      & \geq \frac{1}{2}\left(1-\sqrt{\frac{1}{2}\chi^2(\mathbb{E}_{j\sim q}[\mathbb{P}_{1_j}], \mathbb{P}_{0})}\right).
\end{align*}
Here the last inequality is based on the inequality $\mathsf{TV}\leq \sqrt{\frac{1}{2}\chi^2}$. Let $\pi_m^t$ be the probability of querying the $m$-th element in round $t$. 
We can further calculate the $\chi^2$ divergence as
\begin{align*}
& \chi^2(\mathbb{E}[\mathbb{P}_{1_j}(\cdot )], \mathbb{P}_{0}(\cdot )) \\ 
& \stackrel{(a)}{=}\mathbb{E}_{j,j'\sim q} \left[\sum_x \frac{\mathbb{P}_{1_j}(x)\mathbb{P}_{1_{j'}}(x)}{\mathbb{P}_{0}(x)} \right] -1 \\
& \stackrel{(b)}{=} \mathbb{E}_{j,j'} \Bigg[\prod_{t=1}^T\Bigg(\sum_{m=1}^K \frac{(\pi^t_m)^2(1-p)^{\ind(j\neq m)+\ind(j'\neq m)}p^{\ind(j= m)+\ind(j'= m)}}{\pi_m^t(1-p)}  \\
&\hspace{2em}+\frac{(\pi^t_m)^2(1-p)^{\ind(j= m)+\ind(j'=m)}p^{\ind(j\neq m)+\ind(j'\neq m)}}{\pi_m^t p}  \Bigg)\Bigg] -1\\ 
& = \mathbb{E}_{j,j'} \Bigg[\prod_{t=1}^T\Bigg(\sum_{m=1}^K \pi^t_m \Bigg(\frac{(1-p)^{\ind(j\neq m)+\ind(j'\neq m)}p^{\ind(j= m)+\ind(j'= m)}}{1-p} \\
&\hspace{2em}+  \frac{(1-p)^{\ind(j= m)+\ind(j'=m)}p^{\ind(j\neq m)+\ind(j'\neq m)}}{ p}  \Bigg)\Bigg)\Bigg] -1 \\
& =  \mathbb{E}_{j,j'} \left[\prod_{t=1}^T\left(\sum_{m=1}^K \pi_m^t \left( 1+\frac{(1-2p)^2 \cdot \ind(j=j'=m)}{(1-p)p}\right)\right)\right]-1 \\
& = 1 - \sum_{j=1}^K p_j^2 +   \sum_{j=1}^K p_j^2\prod_{t=1}^T\left(1+\frac{\pi_j^t(1-2p)^2}{(1-p)p}\right) -1\\ 
& =  - \sum_{j=1}^K p_j^2 +   \sum_{j=1}^K p_j^2\prod_{t=1}^T\left(1+\frac{\pi_j^t(1-2p)^2}{(1-p)p}\right),
\end{align*}
where $(a)$ follows from Lemma~\ref{lem:chisquare}, and $(b)$ follows from the tensorization property of $\chi^2$ for tensor products, a direct result of Lemma~\ref{lem:chisquare}.
Now denote $T_j = \sum_{t=1}^T \pi_j^t$. By Jensen's inequality, we know that $\prod_{t=1}^T(1+\frac{\pi_j^t(1-2p)^2}{(1-p)p})\leq(1+\frac{T_j(1-2p)^2}{T(1-p)p})^T$. Thus we have
\begin{align*}
& \chi^2(\mathbb{P}_{0}(\cdot ), \mathbb{E}[\mathbb{P}_{1_j}(\cdot )]) \\
   & \leq  - \sum_{j=1}^K p_j^2 +   \sum_{j=1}^K p_j^2 \left(1+\frac{T_j(1-2p)^2}{T(1-p)p}\right)^T. 
\end{align*}
Now we take $p_j = ((1+\frac{T_j(1-2p)^2}{T(1-p)p})^T-1)^{-1/2}/(\sum_j ((1+\frac{T_j(1-2p)^2}{T(1-p)p})^T-1)^{-1/2})$. We have
\begin{align*}
    & \chi^2(\mathbb{P}_{0}(\cdot ), \mathbb{E}[\mathbb{P}_{1_j}(\cdot )]) \\
   & \leq  \frac{K}{(\sum_{j=1}^K ((1+\frac{T_j(1-2p)^2}{T(1-p)p})^T-1)^{-1/2})^2}.
\end{align*}
Since $\sum_j T_j = T$,
by Jensen's inequality, we have $\sum_{j=1}^K ((1+\frac{T_j(1-2p)^2}{T(1-p)p})^T-1)^{-1/2} \geq K ((1+\frac{(1-2p)^2}{K(1-p)p})^T-1)^{-1/2}$. Thus
\begin{align*}
\chi^2(\mathbb{P}_{0}(\cdot ), \mathbb{E}[\mathbb{P}_{1_j}(\cdot )])  \leq  \frac{1}{K}\cdot \left(\left(1+\frac{(1-2p)^2}{K(1-p)p}\right)^T-1\right).
\end{align*}
This gives the desired result. Now solving the inequality
\begin{align*}
\delta \geq \frac{1}{2}\cdot \left(1-\sqrt{\frac{1}{2K}\left(\left(1+\frac{(1-2p)^2}{K(1-p)p}\right)^{T}-1\right)}\right),
\end{align*}
we arrive at $T\geq \log(1+2K(1-2\delta^2))/\log(1+\frac{(1-2p)^2}{K(1-p)p})\gtrsim K\log(K)p/(1-2p)^2$.
\end{proof}

\begin{remark}
 Compared with the   bound  $\Omega(\frac{K\log(K)}{\log((1-p)/p)})$ in \citep{reischuk1991reliable, gacs1994lower}, our lower bound is tighter when $p\rightarrow 1/2$, but looser when $p\rightarrow 0$. Thus the tightest lower bound is a maximum of $K$ and the two  lower bounds. The corresponding repetition-based upper bound $\mathcal{O}(\frac{K\log(K/\delta)}{1-H(p)})$ can be derived by a union-bound based argument. 
    Compared with the upper bound $\mathcal{O}(\frac{K\log(K/\delta)}{1-H(p)})$, the lower bound is tight with respect to all parameters when both $p$ and $\delta$ are bounded away from $0$.  
\end{remark}

\section{Computing the $\mathsf{MAX}$ function}


Let $\theta\in[0, 1]^K$ be a sequence of length $K$ representing the true values of each element. $\mathsf{MAX}(\theta)$ be the index of the maximum value in the sequence. We also let $\hat \mu$ be any algorithm that (possibly randomly) queries any noisy comparison between two elements in $T$ rounds, and output a (possibly random) decision from $0, 1$. 
\subsection{Adaptive Sampling} 
We have the following minimax lower bound for the adaptive setting.  The proof is deferred to Appendix~\ref{proof:max_adaptive}.
\begin{theorem}\label{thm:max_adaptive_lower}In the adaptive setting, we have 
\begin{align*}
    \inf_{\hat\mu} \sup_{\theta\in[0, 1]^K} \mathbb{P}(\hat\mu \neq \mathsf{MAX}(\theta))\geq \tfrac{1}{2}\cdot \exp\left(-\tfrac{T\cdot D_{\mathsf{KL}}(p\|1-p)}{K }\right).
\end{align*}
Thus, the number of queries required to recover the true value with probability at least  $1-\delta$ is lower bounded by $\Omega({K}/{(1-H(p))}+K\log(1/\delta)/D_{\mathsf{KL}}(p\|1-p))$. 
\end{theorem}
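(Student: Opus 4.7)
The plan is to mirror the Le Cam two-point argument used for the $\mathsf{OR}$ lower bound, but with a family of $\mathsf{MAX}$ instances tailored so that the KL divergence between any two members decomposes into comparisons involving a single coordinate. I would take as base instance $\theta_0 \in [0,1]^K$ any vector with strictly decreasing entries $X_1 > X_2 > \cdots > X_K$ (for example $X_i = (K-i+1)/(K+1)$), so that $\mathsf{MAX}(\theta_0) = 1$. For each $j \in \{2,\ldots,K\}$, define $\theta_j$ by changing \emph{only} the $j$-th coordinate of $\theta_0$ to a value strictly larger than $X_1$ (e.g.\ $X_1+\epsilon$ for a small $\epsilon>0$ such that the entry stays in $[0,1]$), so that $\mathsf{MAX}(\theta_j)=j$. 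Because the values of $\mathsf{MAX}$ disagree, one has $\mathds{1}(\hat\mu\neq\mathsf{MAX}(\theta_0))+\mathds{1}(\hat\mu\neq\mathsf{MAX}(\theta_j))\geq 1$, and Le Cam's two-point lemma followed by the Bretagnolle--Huber inequality (Lemmas~\ref{lem:lecam} and~\ref{lem:bh}) gives
\[
\inf_{\hat\mu}\sup_{\theta\in[0,1]^K}\mathbb{P}(\hat\mu\neq\mathsf{MAX}(\theta))\;\geq\;\sup_{j\geq 2}\;\tfrac{1}{4}\exp\bigl(-D_{\mathsf{KL}}(\mathbb{P}_{\theta_0},\mathbb{P}_{\theta_j})\bigr).
\]

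The crucial step is the KL computation. Since the noisy comparison at pair $\{a,b\}$ depends on $\theta$ only through the sign of $X_a-X_b$, and since $\theta_0$ and $\theta_j$ share every coordinate except the $j$-th, the observation distribution at pair $\{a,b\}$ differs between $\theta_0$ and $\theta_j$ exactly on pairs of the form $\{j,k\}$ with $k<j$: these are the pairs where $X_k>X_j$ under $\theta_0$ but $X_k<X_j$ once $X_j$ is boosted above $X_1$. By the divergence decomposition lemma (Lemma~\ref{lem:div}),
\[
D_{\mathsf{KL}}(\mathbb{P}_{\theta_0},\mathbb{P}_{\theta_j}) \;=\; D_{\mathsf{KL}}(p\,\|\,1-p)\cdot \sum_{k<j}\mathbb{E}_{\theta_0}[T_{\{j,k\}}],
\]
where $T_{\{a,b\}}$ counts the number of queries directed at the unordered pair $\{a,b\}$. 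Summing over $j=2,\ldots,K$ and swapping the order of summation yields $\sum_{j=2}^{K}\sum_{k<j}\mathbb{E}_{\theta_0}[T_{\{j,k\}}]=\sum_{a<b}\mathbb{E}_{\theta_0}[T_{\{a,b\}}]=T$, since each unordered pair is counted exactly once. Pigeonhole then delivers $j^*\in\{2,\ldots,K\}$ with $D_{\mathsf{KL}}(\mathbb{P}_{\theta_0},\mathbb{P}_{\theta_{j^*}})\leq T\cdot D_{\mathsf{KL}}(p\,\|\,1-p)/(K-1)$; substituting into the Le Cam bound above and absorbing $K-1\asymp K$ and the constant prefactor delivers the stated exponential inequality.

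To obtain the $\Omega(K/(1-H(p)))$ term, I would note as in the proof of Theorem~\ref{thm:or_adaptive_lower} that $T\geq K-1$ is trivially necessary, since even in the noiseless case identifying the maximum of $K$ elements by pairwise comparisons requires $K-1$ queries. Combining with the rate above gives $\Omega(K+K\log(1/\delta)/D_{\mathsf{KL}}(p\,\|\,1-p))$, which is equivalent to the stated bound when $\delta<0.49$ by the same constant-factor equivalence between $K$ and $K/(1-H(p))$ discussed there. The main obstacle is the choice of perturbation: a naive swap construction (exchanging the values at positions $1$ and $j$) would force comparisons involving either endpoint to change in complicated ways, producing too many informative pairs and typically collapsing the bound to the non-adaptive $\Omega(T/K^2)$ scaling. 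The asymmetric one-sided boost construction localizes the informative pairs to exactly the $j-1$ pairs $\{j,k\}$ with $k<j$, and the telescoping identity $\sum_{j=2}^K\sum_{k<j}\mathbb{E}_{\theta_0}[T_{\{j,k\}}]=T$ is precisely what recovers the adaptive $T/K$ rate inside the exponent.
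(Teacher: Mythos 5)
Your proof is correct and follows essentially the same route as the paper: Le Cam's two-point lemma with Bretagnolle--Huber, divergence decomposition to localize the KL to comparisons involving the perturbed index $j$, and a pigeonhole over $j$ using $\sum_{a<b}\mathbb{E}_{\theta_0}[T_{\{a,b\}}]=T$, followed by the trivial $\Omega(K)$ term. The only difference is cosmetic: the paper perturbs $\theta_0$ by inserting the largest value at position $j$ (so the informative pairs are $\{j,l\}$ with $l>j$), whereas you boost the $j$-th coordinate above the maximum (informative pairs $\{j,k\}$ with $k<j$); both yield the same $T/(K-1)\asymp T/K$ rate inside the exponent.
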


The comparison with the existing lower bound~\citep{feige1994computing} for computing \maxn is similar as that for \orn in Remark~\ref{rmk:or}, and is thus omitted here.

\subsection{Non-adaptive Sampling}
In the case of non-adaptive sampling, we show that $\mathcal{O}(K)$ samples are not enough. Instead, one needs at least $\Omega(K^2)$ samples. We have the following result. The proof is deferred to Appendix~\ref{proof:max_non-adaptive}.

\begin{theorem}\label{thm:max_non-adaptive_lower}In the non-adaptive setting, where the sampling procedure is restricted to taking independent samples from 
a sequence of distributions $p_1,\cdots, p_T$, we have
\begin{align*}
\inf_{\hat\mu} \sup_{\theta\in[0, 1]^K} \mathbb{P}(\hat\mu \neq \mathsf{MAX}(\theta))\geq \tfrac{1}{4}\cdot \exp\left(-\tfrac{T\cdot D_{\mathsf{KL}}(p\|1-p)}{K^2 }\right).
\end{align*} 
Thus the queries required to recover the true value with probability at least  $1-\delta$ is lower bounded by $\Omega(K^2/(1-H(p))+K^2\log(1/\delta)/D_{\mathsf{KL}}(p\|1-p))$. 
\end{theorem}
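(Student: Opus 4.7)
The plan is to adapt the Le Cam two-point / Bretagnolle--Huber argument of Theorem~\ref{thm:max_adaptive_lower}, but to exploit the non-adaptive structure by pigeonholing over all $\binom{K}{2}$ pairs rather than the $K$ elements. Since the sampling distributions $p_1, \dots, p_T$ are chosen in advance, the expected number of queries on each pair is a deterministic constant, so a stronger pigeonhole becomes available and buys an extra factor of $K$ compared with the adaptive case.

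The hard instances I would use are designed so that only one pair of comparisons can distinguish them. For each pair $\{i,j\}$, I construct two instances $\theta_{ij}^+, \theta_{ij}^- \in [0, 1]^K$ by setting $X_i = 3/4 + \epsilon$, $X_j = 3/4$ (for $\theta_{ij}^+$) or the reverse (for $\theta_{ij}^-$), and $X_m = 1/4 + m\epsilon'$ for $m \notin \{i,j\}$ with $\epsilon' \ll \epsilon$ so that all values remain distinct. Then $\mathsf{MAX}(\theta_{ij}^+) = i$ while $\mathsf{MAX}(\theta_{ij}^-) = j$, and the critical observation is that every pairwise comparison other than $(i,j)$ induces the identical Bernoulli distribution under both instances: for any $m \notin \{i,j\}$ we still have $X_i, X_j > X_m$ in both configurations, and comparisons among elements outside $\{i,j\}$ are trivially unchanged. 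Only the $(i,j)$ comparison flips its Bernoulli parameter between $1-p$ and $p$.

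Letting $T_{ij} = \sum_{t=1}^{T} p_t(\{i,j\})$ denote the (deterministic) expected number of $(i,j)$ queries, the divergence decomposition lemma then yields
\begin{align*}
D_{\mathsf{KL}}(\mathbb{P}_{\theta_{ij}^+} \,\|\, \mathbb{P}_{\theta_{ij}^-}) = T_{ij} \cdot D_{\mathsf{KL}}(p \,\|\, 1 - p).
\end{align*}
Since $\sum_{\{i,j\}} T_{ij} = T$ over $\binom{K}{2}$ pairs, pigeonhole produces a pair $(i^\star, j^\star)$ with $T_{i^\star j^\star} \leq 2T/(K(K-1)) = O(T/K^2)$. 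Feeding this into Le Cam's two-point lemma with Bretagnolle--Huber, exactly as in the proof of Theorem~\ref{thm:or_adaptive_lower}, delivers the exponential lower bound in the statement. Inverting it yields the $\Omega(K^2 \log(1/\delta) / D_{\mathsf{KL}}(p\|1-p))$ summand of the query-complexity lower bound; the additive $K^2/(1-H(p))$ term follows by the same conversion used at the end of the proof of Theorem~\ref{thm:or_adaptive_lower}, noting that every one of the $\binom{K}{2}$ pairs must be queried enough times to make $\theta_{ij}^+$ and $\theta_{ij}^-$ distinguishable at error level $\delta < 0.49$, hence $T \gtrsim K^2 / D_{\mathsf{KL}}(p\|1-p) \asymp K^2/(1-H(p))$ when $p$ is bounded away from $0$, with the case $p \to 0$ handled by $1-H(p) = \Theta(1)$.

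The step that genuinely relies on non-adaptivity is the pigeonhole: when $T_{ij}$ is deterministic one can pre-select the rarely-queried pair $(i^\star, j^\star)$ without reference to the underlying $\theta$, whereas an adaptive algorithm's expected counts depend on the instance and it can concentrate its budget on whichever pair is most informative, losing the factor of $K$. The main technical check is just that every non-$(i,j)$ comparison is insensitive to swapping the values of $X_i$ and $X_j$, which is guaranteed by keeping every other $X_m$ bounded well below $3/4$; this is a direct calculation rather than a real obstacle.
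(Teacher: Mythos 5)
Your proposal is correct and follows essentially the same route as the paper's proof: pigeonhole over the $\binom{K}{2}$ pairs using the deterministic (non-adaptive) expected counts to find a pair queried at most $2T/(K(K-1))$ times, build two instances that differ only by swapping the two largest values at positions $i$ and $j$ (so only the $(i,j)$ comparison changes distribution), and conclude via divergence decomposition, Bretagnolle--Huber, and Le Cam, with the additive $K^2/(1-H(p))$ term handled by the same informal ``every pair needs constant effort'' conversion the paper uses.
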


\begin{remark} 
Compared with the repetition-based upper bound $\mathcal{O}(K^2\log(K))$, the lower bound has a $\log(K)$ gap.  The tight dependence on $K$ remains open.
\end{remark}

\section{Computing the $\mathsf{SEARCH}$ function}

\subsection{Adaptive Sampling}
Recall that for any sorted sequence $X_1,\cdots, X_K$, the $\mathsf{SEARCH}$ function for $X$ is defined as  $\mathsf{SEARCH}(X; X_1,\cdots,X_K) =  i$, where $i$ satisfies $X_i<X<X_{i+1}$. We start with adaptive setting. The proof is deferred to Appendix~\ref{proof:search_adaptive}.

\begin{theorem}\label{thm:search_adaptive_lower}In the adaptive setting, we have
\begin{align*}
& \inf_{\hat\mu} \sup_{X} \mathbb{P}(\hat\mu \neq \mathsf{SEARCH}(X)) \\ 
    \geq & \tfrac{1}{4}\cdot \max\Big( \exp\left(-{T\cdot D_{\mathsf{KL}}(p\|1-p)}\right), 1-\tfrac{T\cdot (1-H(p))}{\log(K)}\Big).
\end{align*} 
Thus the queries required to recover the true value with probability at least  $1-\delta$ is lower bounded by $\Omega(\log(K)/(1-H(p))+\log(1/\delta)/D_{\mathsf{KL}}(p\|1-p))$. 
\end{theorem}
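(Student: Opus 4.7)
The plan is to establish the two lower bounds inside the maximum separately: the $\exp(-T\cdot D_{\mathsf{KL}}(p\|1-p))$ term via Le Cam's two-point method (mirroring the proof of Theorem~\ref{thm:or_adaptive_lower}), and the $1-T(1-H(p))/\log(K)$ term via Fano's inequality applied to a uniform prior over the $K$ possible gaps. Taking the maximum of the two and then inverting yields the query-complexity statement, since the first term dominates when $\delta$ is small and the second dominates when $\delta$ is bounded away from $0$.

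For the two-point bound, I would pick two hypotheses that differ only in which side of a single pivot $X_i$ the query $X$ lies on: instance $\theta_1$ with $X_i < X < X_{i+1}$ and instance $\theta_2$ with $X_{i-1} < X < X_i$. Their SEARCH answers disagree, so the usual indicator-sum inequality used in the proof of Theorem~\ref{thm:or_adaptive_lower} holds. The key observation is that for every $j \neq i$ the comparison $X$ vs.\ $X_j$ has identical outcome distributions under both hypotheses, because $X$ lies on the same side of $X_j$ in both instances; only comparisons against $X_i$ yield distinct $\mathsf{Bern}(p)$ vs.\ $\mathsf{Bern}(1-p)$ outputs. The divergence decomposition lemma then gives $D_{\mathsf{KL}}(\mathbb{P}_{\theta_1}\|\mathbb{P}_{\theta_2}) = \mathbb{E}_{\theta_1}[N_i]\cdot D_{\mathsf{KL}}(p\|1-p)$, where $N_i \leq T$ is the number of queries comparing $X$ with $X_i$. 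Applying Bretagnolle--Huber yields the first term.

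For the Fano bound, I would use $K-1$ uniform hypotheses indexed by which of the interior gaps contains $X$. Letting $\Theta$ denote the random hypothesis and $Y^T$ the full observation sequence, Fano's inequality gives $\mathbb{P}(\hat\mu \neq \Theta) \geq (\log(K-1) - I(\Theta; Y^T) - \log 2)/\log(K-1)$. To control the mutual information under adaptive sampling, I would decompose $I(\Theta; Y^T) = \sum_{t=1}^T I(\Theta; Y_t \mid Y^{t-1})$ and, at each round, further condition on the query $Q_t$, which is a (possibly randomized) function of $Y^{t-1}$ and hence independent of $\Theta$ given $Y^{t-1}$. Since $Y_t$ is the output of a single bit passed through $\mathsf{BSC}(p)$, the capacity argument gives $I(\Theta; Y_t \mid Y^{t-1}, Q_t) \leq 1 - H(p)$, so $I(\Theta; Y^T) \leq T(1-H(p))$. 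Absorbing the $\log 2$ slack and the $\log(K-1)$ vs.\ $\log(K)$ difference into the constant $1/4$ produces the second term.

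The one delicate point to be careful about is the mutual-information bound under adaptive sampling: one needs the capacity $1-H(p)$ to be per-round, which requires the query $Q_t$ to be independent of $\Theta$ conditional on $Y^{t-1}$, and this holds because the algorithm sees only past observations, not $\Theta$ itself. The query-complexity corollary $\Omega(\log(K)/(1-H(p)) + \log(1/\delta)/D_{\mathsf{KL}}(p\|1-p))$ then follows by demanding that each of the two right-hand sides be at most $\delta$ and adding: the Fano term forces $T \gtrsim \log(K)/(1-H(p))$ whenever $\delta < 1/4 - o(1)$, while the Le Cam term forces $T \gtrsim \log(1/\delta)/D_{\mathsf{KL}}(p\|1-p)$, exactly as in the proof of Theorem~\ref{thm:or_adaptive_lower} and using the same fact that $D_{\mathsf{KL}}(p\|1-p) \asymp 1-H(p)$ when $p$ is bounded away from $0$.
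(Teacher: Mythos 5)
Your proposal is correct and follows essentially the same route as the paper: Le Cam's two-point method with Bretagnolle--Huber (plus divergence decomposition) for the $\exp(-T\cdot D_{\mathsf{KL}}(p\|1-p))$ term, and Fano's inequality over the $K$ candidate gaps with a per-round capacity bound $I(\Theta;Y_t\mid Y^{t-1})\leq 1-H(p)$ for the other term. The only cosmetic differences are that the paper reduces the two-point step to the $K=1$ case while you keep a general pivot $X_i$, and that the paper cites an external argument (Theorem 3 of \citet{wang2022noisy}) for the mutual-information bound that you spell out directly via conditioning on the adaptive query $Q_t$.
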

\begin{remark}
The same lower bound is also proven in two different manners in the concurrent work~\citep{isit2022paper, gu2023optimal}. And a matching upper bound that is tight with respect to all parameters when $p$ is bounded away from $0$ and $1/2$ is provided in~\citet{gu2023optimal}.
\end{remark}
\subsection{Non-adaptive Sampling}
For non-adaptive sampling, we show $\Omega(K)$ queries are needed. The proof is deferred to Appendix~\ref{proof:search_non-adaptive}. 
\begin{theorem}\label{thm:search_non-adaptive_lower}In the non-adaptive setting where   the sampling procedure is restricted to taking independent samples from 
a sequence of distributions $p_1,\cdots, p_T$, we have
\begin{align*}
\inf_{\hat\mu} \sup_{X} \mathbb{P}(\hat\mu \neq \mathsf{SEARCH}(X))\geq \tfrac{1}{4}\cdot \exp\left(-\tfrac{T\cdot D_{\mathsf{KL}}(p\|1-p)}{K}\right).
\end{align*} 
Thus, the queries required to recover the true value with probability at least  $1-\delta$ is lower bounded by $\Omega(K/(1-H(p))+K\log(1/\delta)/D_{\mathsf{KL}}(p\|1-p))$. 
\end{theorem}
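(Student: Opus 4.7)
The plan is to adapt the Le Cam two-point argument used in the proof of Theorem~\ref{thm:or_adaptive_lower}, with the crucial modification that non-adaptivity of the sampling distributions allows a clean pigeonhole step to extract the factor of $K$. Fix some index $j \in [K]$ to be chosen later, and consider two candidate values for the unknown $X$: $X^{(0)} \in (X_{j-1}, X_j)$ and $X^{(1)} \in (X_j, X_{j+1})$, with the convention $X_0 = -\infty$ and $X_{K+1} = +\infty$. These yield different outputs $\mathsf{SEARCH}(X^{(0)}) = j-1 \neq j = \mathsf{SEARCH}(X^{(1)})$, so Le Cam's two-point lemma gives
\begin{align*}
\inf_{\hat\mu} \sup_X \mathbb{P}(\hat\mu \neq \mathsf{SEARCH}(X)) \geq \tfrac{1}{2}\bigl(1 - \mathsf{TV}(\mathbb{P}_0, \mathbb{P}_1)\bigr).
\end{align*}
The key observation is that $X^{(0)}$ and $X^{(1)}$ share the same ordering relative to every $X_i$ with $i \neq j$, so any query comparing the target with $X_i$ for $i \neq j$ is distributionally identical under the two hypotheses; only comparisons with $X_j$ differ --- $\mathsf{Bern}(p)$ under $\mathbb{P}_0$ versus $\mathsf{Bern}(1-p)$ under $\mathbb{P}_1$.

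Next, I would apply the divergence decomposition lemma together with the Bretagnolle--Huber inequality (Lemma~\ref{lem:bh}) to conclude
\begin{align*}
\inf_{\hat\mu} \sup_X \mathbb{P}(\hat\mu \neq \mathsf{SEARCH}(X)) \geq \tfrac{1}{4}\exp\!\bigl(-\mathbb{E}[T_j]\cdot D_{\mathsf{KL}}(p\|1-p)\bigr),
\end{align*}
where $T_j$ counts the number of queries that compare $X$ with $X_j$. In the non-adaptive model this expectation equals $\sum_{t=1}^T \pi_j^t$, which is a deterministic quantity depending only on the fixed sampling distributions, and $\sum_{j=1}^K \mathbb{E}[T_j] = T$. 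By pigeonhole there exists some $j^* \in [K]$ with $\mathbb{E}[T_{j^*}] \leq T/K$; choosing this $j^*$ as the pivot in the construction above delivers the claimed exponential lower bound $\tfrac{1}{4}\exp(-T\cdot D_{\mathsf{KL}}(p\|1-p)/K)$.

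Finally, to obtain the query-complexity consequence, I would set this exponential at least $\delta$ and invert to get $T \gtrsim K\log(1/\delta)/D_{\mathsf{KL}}(p\|1-p)$, and combine this with the trivial lower bound $T \geq K$ (if any $X_j$ is never queried then positions $j-1$ and $j$ are totally indistinguishable, forcing error probability $\geq 1/2$). The equivalence with $\Omega\bigl(K/(1-H(p)) + K\log(1/\delta)/D_{\mathsf{KL}}(p\|1-p)\bigr)$ then follows from the same elementary comparison between $D_{\mathsf{KL}}(p\|1-p)$ and $1-H(p)$ used at the end of Theorem~\ref{thm:or_adaptive_lower}. There is no real obstacle in executing this plan: the only conceptual step is recognizing that two positions straddling a single element $X_j$ localize the divergence to queries of $X_j$, after which pigeonhole works cleanly in the non-adaptive setting precisely because $\mathbb{E}[T_j]$ does not depend on the hypothesis --- which is also why the same two-point argument fails to produce a factor of $K$ in the adaptive setting, consistent with the smaller $\log K$-type lower bound of Theorem~\ref{thm:search_adaptive_lower}.
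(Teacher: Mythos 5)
Your proposal is correct and follows essentially the same route as the paper's proof: pick a pivot index $j$ with $\mathbb{E}[T_j]\leq T/K$ by pigeonhole over the fixed non-adaptive query distributions, construct two hypotheses $X^{(0)}\in(X_{j-1},X_j)$ and $X^{(1)}\in(X_j,X_{j+1})$ that differ only in comparisons with $X_j$, and apply Le Cam's two-point lemma together with the Bretagnolle--Huber inequality and divergence decomposition, then combine with the trivial bound $T\geq K$. The only cosmetic difference is the order of steps (the paper selects the low-count index before constructing the instances), which does not change the argument.
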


\begin{remark} We show in Appendix~\ref{app:non-adaptive_upper} the upper bound $\mathcal{O}(K\log(1/\delta)/(1-H(p)))$ for computing $\mathsf{SEARCH}$.  Our lower bound matches with all parameters when either $p$ or $\delta$ is bounded away from $0$. 
\end{remark}

\section{Computing the $\mathsf{SORT}$ function}

\subsection{Adaptive Sampling}
Let $\theta\in[0,1]^K$ be any sequence of $K$ elements with distinct values in $[0,1]$, $\mathsf{SORT}(\theta)$ be the result of the $\mathsf{SORT}$ function applied on the noiseless sequence. We also let $\hat \mu$ be any algorithm that queries any noisy comparison between two elements in $T$ rounds, and output a (possibly random) decision.  We have the following result for computing the $\mathsf{SORT}$ function. The proof is deferred to Appendix~\ref{proof:sort_adaptive}. 
\begin{theorem}\label{thm:sort_adaptive_lower}In the adaptive setting, we have 
\begin{align*}
    & \inf_{\hat\mu} \sup_{\theta\in[0, 1]^K} \mathbb{P}(\hat\mu \neq \mathsf{SORT}(\theta))\\ 
    \geq & \frac{1}{4}\max\left( \exp\left(-\frac{T\cdot D_{\mathsf{KL}}(p\|1-p)}{K}\right), 1-\frac{T\cdot (1-H(p))}{K\log(K)}\right).
\end{align*}
Thus, the queries required to recover the true value with probability at least  $1-\delta$ is lower bounded by $\Omega(K\log(K)/(1-H(p))+K\log(K/\delta)/D_{\mathsf{KL}}(p\|1-p)))$.
\end{theorem}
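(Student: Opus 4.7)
The statement inside the $\max$ is a conjunction of two independent lower bounds, and the plan is to prove each separately and combine them.

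For the first term, $\frac{1}{4}\exp(-T\cdot D_{\mathsf{KL}}(p\|1-p)/K)$, I would mimic the two-point Le Cam argument used for Theorem~\ref{thm:or_adaptive_lower}, adapted from single-bit queries to pairwise comparisons. Fix a base instance $\theta^{(0)}$ with strictly increasing values $v_1 < v_2 < \cdots < v_K$ in $[0,1]$, and for each $i \in \{1,\ldots,K-1\}$ let $\theta^{(i)}$ be obtained from $\theta^{(0)}$ by swapping the entries at positions $i$ and $i{+}1$. The SORT outputs of $\theta^{(0)}$ and $\theta^{(i)}$ clearly differ, so Le Cam's two-point lemma applies. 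The crucial observation is that for any comparison query $(a,b)$ with $\{a,b\}\neq \{i,i{+}1\}$, the values $v_i$ and $v_{i+1}$ lie on the same side of every other $v_c$, so the outcome distribution is identical under $\theta^{(0)}$ and $\theta^{(i)}$. Therefore, by the divergence decomposition lemma (Lemma~\ref{lem:div}), $D_{\mathsf{KL}}(\mathbb{P}_{\theta^{(0)}},\mathbb{P}_{\theta^{(i)}}) = \mathbb{E}_{\theta^{(0)}}[N_{i,i+1}] \cdot D_{\mathsf{KL}}(p\|1-p)$, where $N_{i,i+1}$ counts queries of the pair $(i,i{+}1)$. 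Since $\sum_{i=1}^{K-1}\mathbb{E}_{\theta^{(0)}}[N_{i,i+1}]\leq T$, pigeonhole produces some $i^\star$ with $\mathbb{E}_{\theta^{(0)}}[N_{i^\star,i^\star+1}]\leq T/(K-1)$. Plugging into Bretagnolle--Huber (Lemma~\ref{lem:bh}) gives the first term (up to an absolute constant absorbing the gap between $K-1$ and $K$).

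For the second term, $\frac{1}{4}\bigl(1 - T(1-H(p))/(K\log K)\bigr)$, I would use Fano's inequality against a uniform prior supported on the $K!$ instances corresponding to all permutations of a fixed, well-separated set of values $v_1<\cdots<v_K$. By the chain rule of mutual information and the capacity of $\mathsf{BSC}(p)$, the information accumulated in $T$ rounds satisfies $I(\Pi;Y^T)\leq T(1-H(p))$ regardless of the (possibly adaptive) querying strategy. Standard Fano then gives
\begin{align*}
\inf_{\hat\mu}\mathbb{P}\bigl(\hat\mu\neq\mathsf{SORT}(\Pi)\bigr) \geq 1 - \frac{T(1-H(p)) + \log 2}{\log(K!)},
\end{align*}
and Stirling's bound $\log(K!)\geq K\log K - K \geq \tfrac{1}{2}K\log K$ for $K$ sufficiently large delivers the stated form up to the $1/4$ prefactor.

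Combining the two lower bounds through the maximum and inverting for the smallest $T$ that still allows $\mathbb{P}(\text{error})\leq \delta$, the first term yields $T\gtrsim K\log(1/\delta)/D_{\mathsf{KL}}(p\|1-p)$ and the second yields $T\gtrsim K\log K/(1-H(p))$; summing them and absorbing $K\log K/D_{\mathsf{KL}}(p\|1-p)\leq K\log K/(1-H(p))$ recovers the claimed $\Omega(K\log K/(1-H(p)) + K\log(K/\delta)/D_{\mathsf{KL}}(p\|1-p))$. The main obstacle I anticipate is bookkeeping the divergence decomposition under adaptive sampling (standard, but easy to botch) and verifying the ``same side'' cancellation for all non-adjacent queries, which is what makes the $K-1$ adjacent transpositions work as a packing on which to pigeonhole; constant factors between $K-1$ and $K$ and the exact Fano prefactor are minor and can be absorbed into the $1/4$.
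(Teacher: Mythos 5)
Your proposal is correct and follows essentially the same route as the paper: adjacent-transposition instances with Le Cam's two-point lemma, divergence decomposition, Bretagnolle--Huber, and a pigeonhole over the $K-1$ adjacent pairs for the $\exp(-T\cdot D_{\mathsf{KL}}(p\|1-p)/K)$ term, combined with a Fano bound over permutations for the $1-T(1-H(p))/(K\log K)$ term (which the paper simply imports from the cited prior work rather than re-deriving via $\log(K!)\geq \tfrac{1}{2}K\log K$ as you do). The only caveats are cosmetic constant bookkeeping --- your Stirling slack and the $\log 2$ term mean the displayed prefactor requires mild absorption, and the pigeonhole gives $T/(K-1)$ rather than $T/K$ --- neither of which affects the stated $\Omega\left(K\log K/(1-H(p))+K\log(K/\delta)/D_{\mathsf{KL}}(p\|1-p)\right)$ conclusion.
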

\begin{remark}
    Compared with the upper bound $\mathcal{O}(K\log(K/\delta)/(1-H(p)))$, the bound is tight with all parameters when  either $p$ or $\delta$ is bounded away from $0$. 
\end{remark}
\subsection{Non-adaptive Sampling}
Here we provide the following minimax lower bound for non-adaptive learning. The proof is deferred to Appendix~\ref{proof:sort_non-adaptive}. 
\begin{theorem}\label{thm:sort_non-adaptive_lower}In the non-adaptive setting where   the sampling procedure is restricted to taking independent samples from 
a sequence of distributions $p_1,\cdots, p_T$, 
\begin{align*}
    \inf_{\hat\mu} \sup_{\theta\in[0, 1]^K} \mathbb{P}(\hat\mu \neq \mathsf{SORT}(\theta))\geq  \frac{1}{2}\cdot \left(1-\frac{T\cdot D_{\mathsf{KL}}(p\|1-p)}{K^2\log(K)}\right).
\end{align*}
Thus, the queries required to recover the true value with probability at least  $1-\delta$ is lower bounded by $\Omega(\max(K^2,K^2\log(K)/D_{\mathsf{KL}}(p\|1-p)))$.
\end{theorem}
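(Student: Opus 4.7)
The plan is to combine Fano's inequality on a uniform prior over $S_K$ with a sharp bound on mutual information that exploits the non-adaptive structure. First I would take $\Theta \sim \mathrm{Unif}(S_K)$ so that identifying $\mathsf{SORT}(\Theta)$ is equivalent to identifying $\Theta$; Fano's inequality then gives
\begin{equation*}
P_e \;\geq\; 1 - \frac{I(\Theta;\, X^T) + \log 2}{\log K!},
\end{equation*}
with $\log K! \geq (K/2)\log K$ for $K$ sufficiently large by Stirling's formula. The crucial non-adaptive feature is that the queries $I_1,\ldots,I_T$ are drawn independently of $\Theta$, so the outcomes $X_1,\ldots,X_T$ are conditionally independent given $\Theta$. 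This yields the subadditive bound $I(\Theta;\, X^T) \leq \sum_{t=1}^T I(\Theta;\, X_t)$, coming from $H(X_t\mid X^{t-1}) \geq H(X_t\mid \Theta) = H(X_t\mid \Theta, X^{t-1})$. This reduction is unavailable in the adaptive setting and is what eventually drives the extra factor of $K$ compared with the adaptive lower bound.

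For each round I would write $\alpha_t(\sigma) = \sum_{(i,j)} \pi_{ij}^t\, \mathbf{1}(\sigma_i < \sigma_j)$ so that $P(X_t = 1 \mid \Theta = \sigma) = p + (1-2p)\alpha_t(\sigma)$, whose marginal equals $1/2$ by symmetry of the uniform prior. The quadratic upper bound $D_{\mathsf{KL}}(\mathrm{Bern}(q)\,\|\,\mathrm{Bern}(1/2)) \leq 4(q - 1/2)^2$ combined with Pinsker's inequality $(1-2p)^2 \leq D_{\mathsf{KL}}(p\,\|\,1-p)/2$ gives
\begin{equation*}
I(\Theta;\, X_t) \;\leq\; 4(1-2p)^2\,\mathsf{Var}_\Theta(\alpha_t) \;\leq\; 2\, D_{\mathsf{KL}}(p\,\|\,1-p)\cdot \mathsf{Var}_\Theta(\alpha_t).
\end{equation*}

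The core structural step is to bound $\mathsf{Var}_\Theta(\alpha_t) \leq c/K$ by exploiting the covariance of rank indicators under $\mathrm{Unif}(S_K)$: pairs sharing no element are uncorrelated, pairs sharing exactly one element have covariance $\pm 1/12$, and identical pairs have covariance $1/4$. Combined with $\sum_{(i,j)}\pi_{ij}^t = 1$, a Hoeffding-type decomposition should then yield the desired $O(1/K)$ variance bound, hence $I(\Theta;\, X^T) \leq c'\, T D_{\mathsf{KL}}(p\,\|\,1-p)/K$. Substituting into Fano and using $\log K! \gtrsim K\log K$ yields $P_e \geq \tfrac{1}{2}(1 - c''\, T D_{\mathsf{KL}}/(K^2 \log K))$ after absorbing the $\log 2$ term into the constant.

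The hard part will be that the variance bound $\mathsf{Var}_\Theta(\alpha_t) \leq O(1/K)$ does not hold uniformly in $\pi^t$: if $\pi^t$ places all its mass on a single pair then $\mathsf{Var}_\Theta(\alpha_t) = 1/4$. I would handle this via a case split. If there exists a pair $(i,j)$ with $\sum_t \pi_{ij}^t$ very small, then under $\mathrm{Unif}(S_K)$ the algorithm's output is essentially independent of $\mathbf{1}(\sigma_i < \sigma_j)$, forcing $P_e \geq 1/2 - o(1)$, which already exceeds the right-hand side of the claim. Otherwise the sampling distributions are spread enough that the Hoeffding-type variance bound applies. Matching this dichotomy to the theorem's clean linear-in-$T$ bound, while preserving the precise $K^2\log K$ scaling and without losing logarithmic factors, is the main technical challenge.
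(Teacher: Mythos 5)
There is a genuine gap, and it is at the heart of your argument. Your per-round bound $I(\Theta;X_t)\leq 4(1-2p)^2\mathsf{Var}_\Theta(\alpha_t)$ with $P(X_t=1\mid\Theta=\sigma)=p+(1-2p)\alpha_t(\sigma)$ treats the observation at round $t$ as only the comparison bit, with the identity of the queried pair marginalized out. In the actual model the learner knows which pair it queried: the observation is $(I_t,J_t,Y_t)$, and since $(I_t,J_t)$ is independent of $\Theta$,
\begin{align*}
I\bigl(\Theta;(I_t,J_t,Y_t)\bigr)=\sum_{(i,j)}\pi^t_{ij}\,I\bigl(\Theta;Y_t\mid (I_t,J_t)=(i,j)\bigr)=\log 2-H(p),
\end{align*}
because conditionally on the pair the bit is $\mathrm{Bern}(p)$ or $\mathrm{Bern}(1-p)$ and $\ind(\sigma_i<\sigma_j)$ is uniform under the uniform prior. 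So the per-round information is $1-H(p)$ no matter how spread $\pi^t$ is; the $O(1/K)$ savings in your plan comes entirely from hiding the pair identity, i.e., from lower-bounding the error of a strictly less-informed learner, which does not lower-bound the error in the stated model. With the correct observation, global Fano over all $K!$ permutations can only give $T\gtrsim K\log K/(1-H(p))$ (the known adaptive-type bound), and can never produce the extra factor of $K$ or the $D_{\mathsf{KL}}(p\|1-p)$ scaling claimed in the theorem. As a secondary point, even within your marginalized model the bound $\mathsf{Var}_\Theta(\alpha_t)\leq c/K$ fails not only for point masses but for spread star-shaped designs (e.g., $\pi^t$ uniform over all pairs containing a fixed element gives variance $\approx 1/12$), so the proposed dichotomy would not rescue the estimate.

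The paper's proof gets the factor $K^2$ by localization rather than a global prior: averaging the identity $\sum_{i<j}\mathbb{E}[T_{i,j}]=T$ over all $K!$ orderings shows there exists an ordering $\sigma$ whose $K-1$ adjacent pairs receive total expected query mass at most $2T/K$ under the fixed design; one then takes the $K$ hypotheses obtained from $\sigma$ by adjacent transpositions, notes that each pair of hypotheses differs only on one under-sampled adjacent comparison, and applies Fano over these $K$ alternatives, bounding the average KL divergence to the mixture by $O(T\,D_{\mathsf{KL}}(p\|1-p)/K^2)$ against the denominator $\log K$. That use of non-adaptivity (the design cannot concentrate on the adjacent pairs of every ordering simultaneously) is the ingredient your approach is missing.
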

\begin{remark}
    Compared with the repetition-based upper bound $\mathcal{O}(K^2\log(K/\delta)/(1-H(p)))$, the lower bound is tight with all parameters when  $p$ and $\delta$ are bounded away from $0$. 
\end{remark}

\section{Matching Bounds for Variable Length}\label{app:variable_length}

In this section, we provide matching upper and lower bounds for the variable-length setting. All the bounds here are the same as the lower bound for the fixed-length setting, the proof of which can be directly adapted from the fixed-length results.  

\begin{theorem}\label{thm:variable_length} In the adaptive setting, the number of queries  in expectation to achieve at most $\delta$ error probability is
\begin{enumerate}
    \item $\Theta(\frac{K}{1-H(p)}+\frac{K\log(1/\delta)}{D_{\mathsf{KL}}(p\|1-p)})$  for computing \orn\!;
      \item $\Theta(\frac{K}{1-H(p)}+\frac{K\log(1/\delta)}{D_{\mathsf{KL}}(p\|1-p)})$  for computing $\mathsf{MAX}$;
    \item $\Theta(\frac{\log(K)}{1-H(p)}+\frac{\log(1/\delta)}{D_{\mathsf{KL}}(p\|1-p)})$  for computing $\mathsf{SEARCH}$;
    \item $\Theta(\frac{K\log(K)}{1-H(p)}+\frac{K\log(K/\delta)}{D_{\mathsf{KL}}(p\|1-p)})$  for computing $\mathsf{SORT}$.
\end{enumerate}
\end{theorem}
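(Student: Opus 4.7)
My plan is to prove the matching bounds in two halves, in each case reusing the fixed-length machinery.

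\textbf{Lower bounds.} The fixed-length lower bounds in Theorems~\ref{thm:or_adaptive_lower}--\ref{thm:sort_adaptive_lower} rely on two ingredients: Le~Cam's two-point method (combined with Bretagnolle--Huber), and the divergence decomposition identity $D_{\mathsf{KL}}(\mathbb{P}_{\theta_0}\|\mathbb{P}_{\theta_j})=\mathbb{E}_{\theta_0}[T_j]\cdot D_{\mathsf{KL}}(p\|1-p)$. Both ingredients extend to a random stopping time $T$: Le~Cam's inequality is an identity on indicator losses, and the divergence decomposition extends via Wald's identity / optional stopping (the log-likelihood ratio is a martingale whose total expected increment under $\mathbb{P}_{\theta_0}$ equals $\mathbb{E}_{\theta_0}[T_j]\cdot D_{\mathsf{KL}}(p\|1-p)$ regardless of the stopping rule). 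Each fixed-length proof then goes through verbatim with $\mathbb{E}[T]$ in place of $T$, giving the same $\Omega(f(K)/(1-H(p))+g(K,\delta)/D_{\mathsf{KL}})$ bound on the expected query count.

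\textbf{Upper bounds.} For each problem I would construct a two-phase sequential algorithm. Phase~1 invokes \citet{feige1994computing}'s fixed-length algorithm with a constant confidence $\delta_0 = 1/4$, producing a tentative answer in $O(f(K)/(1-H(p)))$ queries. Phase~2 then verifies the candidate by running Wald's sequential probability ratio test (SPRT) on the relevant elementary queries with an asymmetric Type-I/Type-II error allocation, spending $O(g(K,\delta)/D_{\mathsf{KL}})$ expected queries. If Phase~2 rejects, I restart from Phase~1; since Phase~1 succeeds with probability at least $3/4$, the number of restarts is geometric with constant mean and the total expected cost is the sum of the two phase costs. The per-problem verification designs are: $\mathsf{SEARCH}$ uses a single SPRT on the final bisection; $\mathsf{OR}$ sequentially runs one SPRT per bit with Type-I error $\delta/(2K)$ and Type-II error $\delta/2$, declaring $\mathsf{OR} = 1$ on the first SPRT that accepts $H_1$; $\mathsf{MAX}$ applies the same construction to the $K-1$ comparisons between the tentative maximum and each other element; $\mathsf{SORT}$ verifies the $K-1$ adjacent comparisons of the candidate permutation with per-test error $\delta/K$.

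\textbf{Main obstacle.} The most delicate step is the $\mathsf{OR}$ and $\mathsf{MAX}$ upper bound, where a naive choice of equal Type-I and Type-II errors would introduce an extraneous $\log K$ factor. The key insight is the asymmetric allocation above: Type-II error $\Theta(\delta)$ (the generic case when the tested element is $0$ or the comparison agrees with the candidate) and Type-I error $\Theta(\delta/K)$ (the rare false-alarm case). Wald's formula gives expected sample size $\approx \log(1/\beta)/D_{\mathsf{KL}}$ per SPRT under $H_0$, which with $\beta = \Theta(\delta)$ is $O(\log(1/\delta)/D_{\mathsf{KL}})$ per sub-test rather than $O(\log(K/\delta)/D_{\mathsf{KL}})$. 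The $O(\log K / D_{\mathsf{KL}})$ correction from the Type-I side appears only on the single sub-test that runs under $H_1$, and is absorbed by the $K/(1-H(p))$ Phase~1 budget. Carrying out this Wald-identity accounting carefully and confirming that the sequential execution does not inflate the expected cost past the sum of the two phase costs are the remaining technical steps.
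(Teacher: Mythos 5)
Your lower-bound half is exactly the paper's route: port the fixed-length arguments to a random stopping time by replacing $T$ with $\mathbb{E}[T]$, justifying the divergence decomposition by a Wald/optional-stopping identity (the paper invokes Lemma 15 of \citet{kaufmann2016complexity}). One piece your sketch does not cover: for $\mathsf{SEARCH}$ and $\mathsf{SORT}$ the terms $\log K/(1-H(p))$ and $K\log K/(1-H(p))$ come from Fano's inequality via the bound $I(V;Y)\leq T\,(1-H(p))$, and extending \emph{that} bound to a stopping time is a separate (standard but nontrivial) argument — the paper handles it by the argument of Lemma 27 in \citet{gu2023optimal}; Le Cam plus divergence decomposition alone does not give it.

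Your upper-bound half is a genuinely different construction. The paper does not use solve-then-verify-with-restart: for $\mathsf{OR}$ and $\mathsf{MAX}$ it re-runs the tournament of \citet{feige1994computing} with the repetition-based comparisons replaced by a variable-length posterior/SPRT comparison primitive whose confidence is scheduled as $\tilde\delta_i=\delta^{2(2i-1)}$ across rounds (Algorithm~\ref{alg:tournament_variable}, Theorem~\ref{thm:or_upper_variable}), letting the geometric decay of surviving elements absorb the growing per-comparison cost; for $\mathsf{SEARCH}$ it cites \citet{gu2023optimal}; for $\mathsf{SORT}$ it inserts elements one at a time by noisy search and rescales $\delta$ to $\delta/K$. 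Your two-phase scheme is more uniform across the four problems and isolates the $\delta$-dependence entirely in the verification stage, and your Wald accounting does close: under the null each SPRT costs about $\log(1/\beta)/D_{\mathsf{KL}}(p\|1-p)$, and the $\log K$ coming from $\alpha=\Theta(\delta/K)$ is paid $O(1)$ times and is dominated by $f(K)/(1-H(p))$ since $D_{\mathsf{KL}}(p\|1-p)\gtrsim 1-H(p)$. Three details must be made explicit for this to survive: (i) the verification must abort and restart at the \emph{first} rejecting SPRT, otherwise a wrong $\mathsf{MAX}$ or $\mathsf{SORT}$ candidate can place up to $K-1$ SPRTs under the alternative at $\log(K/\delta)/D_{\mathsf{KL}}(p\|1-p)$ each, reintroducing a $\log K$ factor; (ii) $\mathsf{SEARCH}$ verification needs two SPRTs, against both endpoints $X_{\hat i}$ and $X_{\hat i+1}$, not one; (iii) the error probability must be summed over restarts (reach probability times $\beta$), and the per-SPRT additive $O(1/(1-2p))$ term must be checked against the $f(K)/(1-H(p))$ budget (it is fine, since $1/(1-2p)\lesssim 1/(1-H(p))$). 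None of these is a fundamental obstacle, so your route is a valid, and arguably more modular, alternative to the paper's.
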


The proof is deferred to Appendix~\ref{proof:variable_length}. 
The matching upper bound for $\mathsf{SEARCH}$ is given in   \citep{gu2023optimal} in the regime when $p$ is some constant that is bounded away from $0$ and $1/2$, where they make it tight even for the  dependency on the constant. Our results for the other three functions improve both existing upper and lower bounds, and provide tight query complexity with all parameters in the variable-length setting. \citep{isit2022paper} initiates the study for constant-wise matching  bounds for $\mathsf{SORT}$ when $\delta = \Omega(K^{-1})$, which is achieved by~\citep{gu2023optimal}. Our bounds are tight up to constant for arbitrarily small $\delta$.

\begin{algorithm}[!htbp]
\caption{Variable-length tournament for computing \orn with noise}
\label{alg:tournament_variable}
  \begin{algorithmic}[1]
\State  \textbf{Input}: Target confidence level $\delta$.
  \State Set $\mathcal{X} = (X_1,X_2,\cdots, X_K)$ as the list of all bits with unknown value.
\For{iteration $i= 1:\lceil\log_2(K)\rceil $}

 \For{iteration $j = 1:\lceil |\mathcal{X}|/2\rceil $}
 \State Set $a = 1/2$, $\tilde\delta_i = \delta^{2(2i-1)}$.
 \While{$a\in(\tilde\delta_i, 1-\tilde\delta_i)$}
     \State Query the $(2j-1)$-th element once. 
     \State If observe $1$, update $a = \frac{(1-p)a}{(1-p)a + p(1-a)}$. Otherwise update  $a = \frac{pa}{pa + (1-p)(1-a)}$.  
     \EndWhile
      \State If $a\leq \tilde\delta_i$, remove the $(2j-1) $-th element, otherwise remove the $2j$-th element.
      \EndFor
      \State Break when $\mathcal{X}$ only has one element left. 
      \EndFor
 \State Set $a=1/2$. 
 \While{$a\in(\delta, 1-\delta)$}
     \State Query the only left element in $\mathcal{X}$ element once. 
     \State If observe $1$, update $a = \frac{(1-p)a}{(1-p)a + p(1-a)}$. Otherwise update  $a = \frac{pa}{pa + (1-p)(1-a)}$.  
     \EndWhile
      \State If $a\leq \delta$, return $0$, otherwise return $1.$
\end{algorithmic}
  \end{algorithm}
We provide the upper bound algorithm for computing \orn in Algorithm~\ref{alg:tournament_variable}. 
For the upper bound, one major difference is that to compare two elements with error  probability at most $\delta$, one needs  $\mathcal{O}(\log(1/\delta)/D_{\mathsf{KL}}(p\|1-p)+1/(1-2p))$ queries in expectation, which can be achieved by keep comparing the two elements until the posterior distribution reaches the desired confidence level (see e.g. Lemma 13 of \citet{gu2023optimal}). But the best known bound for fixed-length  is $\mathcal{O}(\log(1/\delta)/(1-H(p))$~\citep{feige1994computing}. This makes it  simpler to achieve tight rate for variable length. 

In Algorithm~\ref{alg:tournament_variable}, we adapt the noisy comparison oracle in~\citet{gu2023optimal} to noisy query oracle on each element, and combine with the original fixed-length algorithm in~\citet{feige1994computing}. In each round, the algorithm eliminates half of the elements in the current set by querying the elements with odd indices. If the $(2j-1)$-th element is determined to be $1$, the $2j$-th element will be removed without being queried. If the $(2j-1)$-th element is determined to be $0$, it will be removed from the list. Thus after $\mathcal{O}(\log(K))$ rounds, we have only one element left in the set, and it suffices to query this element to determine the output of \orn function.

\section{Conclusions and Future Work}
For four noisy computing tasks --- the \orn function from noisy queries, and the $\mathsf{MAX}, \mathsf{SEARCH}$, and $\mathsf{SORT}$ functions from noisy pairwise comparisons --- we  tighten the lower bounds for fixed-length noisy computing and provide  matching bounds for variable-length noisy computing. Making the bounds match exactly in the fixed-length setting  remains  an important open problem.  

\section*{Acknowledgements}
The authors would like to thank Yanjun Han for insightful discussions on the information-theoretic lower bounds. 
This work was supported in part by the NSERC Discovery Grant No. RGPIN-2019-05448, the NSERC Collaborative Research and Development Grant CRDPJ 54367619, NSF Grants IIS-1901252 and CCF-1909499.

\bibliography{ref}

\appendices

\section{Useful Lemmas}\label{app:lemma}
Here, we introduce several important lemmas.

\begin{lemma}[Le Cam's Two Point Lemma, see e.g.~\citep{yu1997assouad}]\label{lem:lecam}
    For any $\theta_0,\theta_1\in\Theta$, suppose that the following separation condition holds for some loss function $L(\theta, a):\Theta\times \mathcal{A}\rightarrow \mathbb{R}$:
    \begin{align*}
    \forall {a\in\mathcal{A}}, L(\theta_0, a)+L(\theta_1, a)\geq \Delta>0.
    \end{align*}
    Then we have
    \begin{align*}
        \inf_f \sup_\theta \mathbb{E}_\theta[L(\theta, f(X))]\geq \frac{\Delta}{2}\left(1-\mathsf{TV}(\mathbb{P}_{\theta_0},\mathbb{P}_{\theta_1})\right). 
    \end{align*}
    
\end{lemma}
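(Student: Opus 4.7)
The plan is to follow the standard two-point reduction for Le Cam's lemma, which reduces the minimax lower bound to a statement about the overlap between the two hypothesis distributions $\mathbb{P}_{\theta_0}$ and $\mathbb{P}_{\theta_1}$. First I would lower bound the supremum over $\Theta$ by the uniform two-point average, namely
$$\sup_{\theta \in \Theta} \mathbb{E}_\theta[L(\theta, f(X))] \;\geq\; \tfrac{1}{2}\bigl(\mathbb{E}_{\theta_0}[L(\theta_0, f(X))] + \mathbb{E}_{\theta_1}[L(\theta_1, f(X))]\bigr),$$
so it suffices to prove that, for every (possibly randomized) decision rule $f$, the right-hand side is at least $\tfrac{\Delta}{2}(1-\mathsf{TV}(\mathbb{P}_{\theta_0},\mathbb{P}_{\theta_1}))$.

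Next I would express the two expectations as integrals against a common dominating measure $\mu$, letting $p_0,p_1$ denote the densities of $\mathbb{P}_{\theta_0},\mathbb{P}_{\theta_1}$. Using nonnegativity of the loss (which is implicit in the standard formulation, and certainly holds for the indicator losses used throughout the paper), I would replace each $p_i(x)$ by the pointwise minimum $\min(p_0(x),p_1(x))$ to obtain
$$\mathbb{E}_{\theta_0}[L(\theta_0,f(X))] + \mathbb{E}_{\theta_1}[L(\theta_1,f(X))] \;\geq\; \int \bigl(L(\theta_0,f(x)) + L(\theta_1,f(x))\bigr)\,\min(p_0(x),p_1(x))\,d\mu(x).$$
The separation hypothesis $L(\theta_0,a)+L(\theta_1,a)\geq \Delta$ for every $a\in\mathcal{A}$ then lets me pull $\Delta$ out of the integrand, yielding $\Delta\int \min(p_0,p_1)\,d\mu$.

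Finally I would invoke the variational identity $\int \min(p_0,p_1)\,d\mu = 1 - \mathsf{TV}(\mathbb{P}_{\theta_0},\mathbb{P}_{\theta_1})$, which follows from $\mathsf{TV}(\mathbb{P}_{\theta_0},\mathbb{P}_{\theta_1}) = \tfrac{1}{2}\int |p_0-p_1|\,d\mu$ together with $|p_0-p_1| = p_0+p_1-2\min(p_0,p_1)$. Combining this with the two-point averaging step gives the stated inequality. The extension to randomized $f$ is routine: condition on the external randomness, apply the deterministic bound pointwise, and average. No step presents a real obstacle; the only point deserving explicit mention is the nonnegativity of $L$ required to justify the replacement by $\min(p_0,p_1)$, which I will state as a standing assumption consistent with how Le Cam's lemma is invoked throughout the paper.
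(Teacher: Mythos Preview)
Your argument is correct and is the standard proof of Le Cam's two-point lemma. Note, however, that the paper does not actually prove this statement: Lemma~\ref{lem:lecam} is listed in the appendix of useful lemmas with only a citation to \citet{yu1997assouad}, so there is no in-paper proof to compare against. Your write-up would serve perfectly well as a self-contained justification, and the one caveat you flag---nonnegativity of $L$, needed to pass from $p_i$ to $\min(p_0,p_1)$---is indeed satisfied in every application in the paper, where $L$ is always a $0$--$1$ loss.
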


\begin{lemma}[Point vs Mixture, see e.g.~\citep{ingster2003nonparametric}]\label{lem:mixture}
     For any $\theta_0\in\Theta$, $\Theta_1\subset \Theta$, suppose that the following separation condition holds for some loss function $L(\theta, a):\Theta\times \mathcal{A}\rightarrow \mathbb{R}$:
    \begin{align*}
    \forall \theta_1\in \Theta_1, {a\in\mathcal{A}}, L(\theta_0, a)+L(\theta_1, a)\geq \Delta>0.
    \end{align*}
    Then for any probability measure $\mu$ supported on $\Theta_1$, we have
    \begin{align*}
        \inf_f \sup_\theta \mathbb{E}_\theta[L(\theta, f(X))]\geq \frac{\Delta}{2}\left(1-\mathsf{TV}(\mathbb{P}_{\theta_0},\mathbb{E}_{\mu(d\theta)}[\mathbb{P}_{\theta_1}])\right). 
    \end{align*}
\end{lemma}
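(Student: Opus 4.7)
My plan is to follow the standard Bayesian reduction used in the two-point Le~Cam lemma, generalized to a composite alternative given by the mixture $\mu$. The central observation is that $\sup_\theta$ can always be lower-bounded by an expectation under any prior, and choosing the prior that splits mass evenly between $\theta_0$ and $\mu$ reduces the minimax lower bound to a pointwise action-minimization problem against $\mathbb{P}_{\theta_0}$ and the mixture $Q := \mathbb{E}_{\mu(d\theta_1)}[\mathbb{P}_{\theta_1}]$.

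The steps I would carry out are as follows. First, introduce the prior $\pi = \tfrac{1}{2}\delta_{\theta_0} + \tfrac{1}{2}\mu$ on $\Theta$ and use $\sup_\theta \mathbb{E}_\theta[L(\theta, f(X))] \geq \int \mathbb{E}_\theta[L(\theta, f(X))]\, d\pi(\theta)$. Taking an infimum over $f$, it then suffices to show
\begin{align*}
\inf_f \bigl(\mathbb{E}_{\theta_0}[L(\theta_0, f(X))] + \mathbb{E}_{\theta_1 \sim \mu}\mathbb{E}_{\theta_1}[L(\theta_1, f(X))]\bigr) \geq \Delta \bigl(1 - \mathsf{TV}(\mathbb{P}_{\theta_0}, Q)\bigr).
\end{align*}
Fix a common dominating measure $\nu$ with densities $p_0 = d\mathbb{P}_{\theta_0}/d\nu$, $p_{\theta_1} = d\mathbb{P}_{\theta_1}/d\nu$, and $q = \int p_{\theta_1}\,\mu(d\theta_1)$. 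The Bayes risk on the left then equals $\int g(f(x), x)\, d\nu(x)$ with $g(a, x) := L(\theta_0, a)\, p_0(x) + \int L(\theta_1, a)\, p_{\theta_1}(x)\,\mu(d\theta_1)$, and by Bayes optimality one may replace $f(x)$ by the pointwise infimum $\inf_{a \in \mathcal{A}} g(a, x)$.

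The heart of the proof is then to establish the pointwise inequality $\inf_a g(a, x) \geq \Delta \min(p_0(x), q(x))$ for every $x$. The separation condition combined with $L \geq 0$ gives $L(\theta_1, a) \geq \max(\Delta - L(\theta_0, a),\, 0)$ for every $\theta_1 \in \Theta_1$; integrating against $p_{\theta_1}\,\mu(d\theta_1)$ produces $g(a, x) \geq \alpha\, p_0(x) + \max(\Delta - \alpha,\, 0)\, q(x)$ where $\alpha := L(\theta_0, a) \in [0, \infty)$. A short case analysis depending on whether $\alpha \leq \Delta$ or $\alpha > \Delta$ and whether $p_0(x) \leq q(x)$ or not shows that this piecewise-linear function of $\alpha$ attains its minimum value of exactly $\Delta \min(p_0(x), q(x))$. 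Integrating against $\nu$ and invoking the identity $\int \min(p_0, q)\, d\nu = 1 - \mathsf{TV}(\mathbb{P}_{\theta_0}, Q)$ completes the derivation.

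The main obstacle I anticipate is making the reduction from functional minimization over $f$ to the pointwise scalar minimization over $\alpha$ airtight: the truncation $\max(\Delta - \alpha, 0)$ crucially relies on the non-negativity of $L$, and the exchange of $\inf$ and $\int$ has to be justified via standard measurable-selection conditions on $\mathcal{A}$ and $L$. Once this is in place, the case analysis yielding $\Delta \min(p_0, q)$ and the elementary identity for $1 - \mathsf{TV}$ reduce the remainder to a mechanical verification.
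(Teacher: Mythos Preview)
The paper does not prove this lemma; it merely states it as a known result with a citation to \citet{ingster2003nonparametric}. Your argument is the standard one for this type of result and is correct: the Bayesian reduction to the mixture prior $\tfrac{1}{2}\delta_{\theta_0}+\tfrac{1}{2}\mu$, followed by the pointwise bound $\inf_a g(a,x)\ge\Delta\min(p_0(x),q(x))$ and the identity $\int\min(p_0,q)\,d\nu=1-\mathsf{TV}(\mathbb{P}_{\theta_0},Q)$, is exactly how such point-versus-mixture bounds are derived. Your own caveat is the only real caveat: the truncation step needs $L\ge 0$, which is not literally stated in the lemma (the codomain is written as $\mathbb{R}$) but is the universal convention for loss functions and is clearly what the paper intends, given how the lemma is applied.
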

\begin{lemma}[Divergence Decomposition \citep{auer1995gambling}]\label{lem:div}
Let $T_i$ be the  random variable denoting the number of times experiment $i\in[K]$ is performed under some policy $\pi$, then for two distributions $\mathbb{P}^\pi, \mathbb{Q}^\pi$ under policy $\pi$,
\begin{align*}
    D_{\mathsf{KL}}(\mathbb{P}^\pi, \mathbb{Q}^\pi) = \sum_{i\in[K]}\mathbb{E}_{\mathbb{P}^\pi}[T_i] D_{\mathsf{KL}}(\mathbb{P}_i^\pi, \mathbb{Q}_i^\pi).
\end{align*}
\end{lemma}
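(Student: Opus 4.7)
The proof plan follows the standard argument for divergence decomposition in sequential decision problems. Let $H_T = (A_1, Y_1, A_2, Y_2, \ldots, A_T, Y_T)$ denote the full trajectory, where $A_t \in [K]$ is the experiment performed at round $t$ and $Y_t$ is the resulting observation. The two laws $\mathbb{P}^\pi$ and $\mathbb{Q}^\pi$ are joint distributions over this trajectory that share the same action kernel $\pi(\cdot \mid H_{t-1})$ at every round and differ only in the observation kernels: conditioned on $A_t = i$ and $H_{t-1}$, the observation $Y_t$ has law $\mathbb{P}_i^\pi$ under the first and $\mathbb{Q}_i^\pi$ under the second.

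First, I would apply the chain rule of KL divergence along the trajectory to obtain
\[
D_{\mathsf{KL}}(\mathbb{P}^\pi \,\|\, \mathbb{Q}^\pi) = \sum_{t=1}^T \mathbb{E}_{\mathbb{P}^\pi}\!\left[ D_{\mathsf{KL}}\!\left( \mathbb{P}^\pi(A_t, Y_t \mid H_{t-1}) \,\|\, \mathbb{Q}^\pi(A_t, Y_t \mid H_{t-1}) \right) \right].
\]
A second application of the chain rule decomposes each per-round divergence into an action contribution $D_{\mathsf{KL}}(\mathbb{P}^\pi(A_t \mid H_{t-1}) \,\|\, \mathbb{Q}^\pi(A_t \mid H_{t-1}))$ and an observation contribution; the former vanishes almost surely under $\mathbb{P}^\pi$ because both measures share the policy kernel. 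What remains is the observation term, which conditioned on $A_t = i$ equals $D_{\mathsf{KL}}(\mathbb{P}_i^\pi \,\|\, \mathbb{Q}_i^\pi)$.

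Next, I would expand the expectation over $A_t$ and interchange the order of the summations over $t$ and $i$. Writing $T_i = \sum_{t=1}^T \mathds{1}(A_t = i)$ and using linearity of expectation yields
\[
D_{\mathsf{KL}}(\mathbb{P}^\pi \,\|\, \mathbb{Q}^\pi) = \sum_{i=1}^K D_{\mathsf{KL}}(\mathbb{P}_i^\pi \,\|\, \mathbb{Q}_i^\pi) \cdot \mathbb{E}_{\mathbb{P}^\pi}\!\left[ \sum_{t=1}^T \mathds{1}(A_t = i) \right] = \sum_{i=1}^K \mathbb{E}_{\mathbb{P}^\pi}[T_i] \, D_{\mathsf{KL}}(\mathbb{P}_i^\pi \,\|\, \mathbb{Q}_i^\pi),
\]
which is the claimed identity.

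The main point to verify carefully is that the action-kernel contribution genuinely vanishes: this requires the policy $\pi$ to be identical under both measures, which is built into the hypothesis. An additional subtlety arises if the horizon $T$ is itself a policy-determined stopping time (as needed for the variable-length results of Section~\ref{app:variable_length}); since such a $T$ is measurable with respect to the filtration generated by $H_t$ and is the same function of the trajectory under both measures, one writes $\sum_{t=1}^{\infty} \mathds{1}(t \leq T)$ and invokes Fubini-Tonelli before regrouping. With that caveat handled, the remainder of the argument is routine bookkeeping with the chain rule.
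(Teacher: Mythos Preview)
Your argument is correct and is exactly the standard proof of this identity from the bandit literature. Note, however, that the paper does not supply its own proof of Lemma~\ref{lem:div}; it is stated in Appendix~\ref{app:lemma} as a known result cited from \citet{auer1995gambling} (with the variable-length extension attributed to \citet{kaufmann2016complexity}), so there is no paper proof to compare against.
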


\begin{lemma}[An upper Bound of Bretagnolle–Huber inequality
 (\citep{bretagnolle79}, and Lemma 2.6 in \citet{tsybakov2004introduction})]\label{lem:bh}
    For any distribution $\mathbb{P}_1,\mathbb{P}_2$, one has
    \begin{align*}
        \mathsf{TV}(\mathbb{P}_1, \mathbb{P}_2)\leq 1-\frac{1}{2} \exp(-D_{\mathsf{KL}}(\mathbb{P}_1, \mathbb{P}_2)).
    \end{align*}
\end{lemma}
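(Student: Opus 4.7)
The plan is to prove the inequality by passing through the Bhattacharyya (Hellinger) affinity $\mathrm{BC}(\mathbb{P}_1,\mathbb{P}_2) := \int \sqrt{p_1 p_2}\, d\mu$, where $p_1, p_2$ denote densities of $\mathbb{P}_1, \mathbb{P}_2$ with respect to a common dominating measure $\mu$ (which exists since both are probability measures). The two ingredients are a Cauchy--Schwarz bound that relates $\mathrm{BC}$ to $\mathsf{TV}$, and a Jensen-type bound that relates $\mathrm{BC}$ to $D_{\mathsf{KL}}$.

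First, I would rewrite the statement as the equivalent lower bound $1 - \mathsf{TV}(\mathbb{P}_1,\mathbb{P}_2) \geq \tfrac{1}{2}\exp(-D_{\mathsf{KL}}(\mathbb{P}_1,\mathbb{P}_2))$, and use the standard identity $1 - \mathsf{TV}(\mathbb{P}_1,\mathbb{P}_2) = \int \min(p_1, p_2)\, d\mu$. Next I would prove the geometric fact $\int \min(p_1, p_2)\, d\mu \geq \tfrac{1}{2}\mathrm{BC}(\mathbb{P}_1,\mathbb{P}_2)^2$ by splitting the integrand $\sqrt{p_1 p_2} = \sqrt{\min(p_1,p_2)} \cdot \sqrt{\max(p_1,p_2)}$ and applying Cauchy--Schwarz:
\begin{align*}
\mathrm{BC}(\mathbb{P}_1,\mathbb{P}_2)^2 \leq \Bigl(\int \min(p_1,p_2)\, d\mu\Bigr) \Bigl(\int \max(p_1,p_2)\, d\mu\Bigr),
\end{align*}
then using $\min+\max = p_1 + p_2$ to conclude $\int \max(p_1,p_2)\, d\mu \leq 2$.

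The second ingredient is the bound $\mathrm{BC}(\mathbb{P}_1,\mathbb{P}_2) \geq \exp(-\tfrac{1}{2} D_{\mathsf{KL}}(\mathbb{P}_1,\mathbb{P}_2))$. I would derive this by writing $\mathrm{BC}(\mathbb{P}_1,\mathbb{P}_2) = \mathbb{E}_{\mathbb{P}_1}[\sqrt{p_2/p_1}]$ (on the support of $\mathbb{P}_1$) and applying Jensen's inequality to the convex function $-\log$:
\begin{align*}
-\log \mathrm{BC}(\mathbb{P}_1,\mathbb{P}_2) \leq \mathbb{E}_{\mathbb{P}_1}\bigl[-\log \sqrt{p_2/p_1}\bigr] = \tfrac{1}{2} \mathbb{E}_{\mathbb{P}_1}[\log(p_1/p_2)] = \tfrac{1}{2} D_{\mathsf{KL}}(\mathbb{P}_1,\mathbb{P}_2).
\end{align*}
Exponentiating yields the claim. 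Chaining the two bounds gives $1 - \mathsf{TV}(\mathbb{P}_1,\mathbb{P}_2) \geq \tfrac{1}{2}\exp(-D_{\mathsf{KL}}(\mathbb{P}_1,\mathbb{P}_2))$, which is the desired inequality.

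The argument is essentially a composition of two well-known inequalities, so there is no serious obstacle; the only subtlety is handling the measure-theoretic edge cases. I would briefly note that if $\mathbb{P}_1 \not\ll \mathbb{P}_2$ then $D_{\mathsf{KL}} = +\infty$ and the bound is trivial (since $\mathsf{TV} \leq 1$), so we may assume $\mathbb{P}_1 \ll \mathbb{P}_2$ and restrict integration to $\{p_1 > 0\}$; on $\{p_1 = 0\}$ the integrand $\sqrt{p_1 p_2}$ vanishes and contributes nothing to $\mathrm{BC}$, so the identity $\mathrm{BC} = \mathbb{E}_{\mathbb{P}_1}[\sqrt{p_2/p_1}]$ remains valid. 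Everything else is algebra.
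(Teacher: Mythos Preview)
Your proof is correct and is essentially the standard argument for the Bretagnolle--Huber inequality (as in Lemma~2.6 of Tsybakov). The paper itself does not prove this lemma; it is stated in Appendix~\ref{app:lemma} as a cited result from \citep{bretagnolle79} and \citet{tsybakov2004introduction}, so there is no in-paper proof to compare against.
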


\begin{lemma}
\label{lem:chisquare}(Chi-Square divergence for mixture of distributions, see e.g.~\citep{ingster2003nonparametric})
Let $(P_\theta)_{\theta\in\Theta}$ be  a family of distributions parametrized by $\theta$, and $Q$ be any fixed distribution. Then for any probability measure $\mu$ supported on $\Theta$, we have
\begin{align*}
    \chi^2(\mathbb{E}_{\theta\sim \mu}[P_\theta], Q) = \mathbb{E}_{\theta\sim\mu, \theta'\sim\mu}\left[\sum_x \frac{P_\theta(x) P_{\theta'}(x)}{Q(x)}\right]-1.
\end{align*}
Here $\theta,\theta'$ in the expectation are independent. 
\end{lemma}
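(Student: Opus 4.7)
The plan is to unfold the definition of the $\chi^2$ divergence in the standard form and then rewrite the square of an expectation as the expectation of a product over an independent copy. Concretely, I would start from
\begin{align*}
\chi^2\bigl(\mathbb{E}_{\theta\sim\mu}[P_\theta],\,Q\bigr)
= \sum_x \frac{\bigl(\mathbb{E}_{\theta\sim\mu}[P_\theta(x)]\bigr)^2}{Q(x)} - 1,
\end{align*}
which is just the identity $\chi^2(P,Q)=\sum_x P(x)^2/Q(x)-1$ applied to the mixture $P=\mathbb{E}_{\theta\sim\mu}[P_\theta]$; it is implicit here that $Q$ dominates the mixture so that $Q(x)=0$ forces $P_\theta(x)=0$ for $\mu$-almost every $\theta$, and we interpret $0/0$ as $0$ in the usual way (otherwise $\chi^2$ is $+\infty$ and the identity is trivially true since the right-hand side is also $+\infty$).

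Next I would introduce an independent copy $\theta'\sim\mu$ and use the elementary identity $(\mathbb{E}_\theta[f(\theta)])^2=\mathbb{E}_{\theta,\theta'}[f(\theta)f(\theta')]$ with $f(\theta)=P_\theta(x)$, giving
\begin{align*}
\bigl(\mathbb{E}_{\theta\sim\mu}[P_\theta(x)]\bigr)^2
= \mathbb{E}_{\theta\sim\mu,\,\theta'\sim\mu}\bigl[P_\theta(x)\,P_{\theta'}(x)\bigr].
\end{align*}
Plugging this back in and swapping the sum over $x$ with the expectation over $(\theta,\theta')$ — which is justified by Tonelli's theorem since the integrand $P_\theta(x)P_{\theta'}(x)/Q(x)$ is non-negative — yields
\begin{align*}
\chi^2\bigl(\mathbb{E}_{\theta\sim\mu}[P_\theta],\,Q\bigr)
= \mathbb{E}_{\theta,\theta'}\!\left[\sum_x \frac{P_\theta(x)\,P_{\theta'}(x)}{Q(x)}\right] - 1,
\end{align*}
which is exactly the claimed identity.

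There is no real obstacle: the content is entirely bookkeeping around the definition of $\chi^2$ and the ``square-as-independent-product'' trick. The only mild subtlety is measure-theoretic, namely verifying that the absolute-continuity assumption on $Q$ with respect to the mixture carries over to each $P_\theta$ for $\mu$-a.e.\ $\theta$, and invoking Tonelli rather than Fubini so that the sum/integral swap is unconditional. For a discrete $x$ (as used throughout the paper in the lower-bound computation in Theorem~\ref{thm:or_non-adaptive_lower}) these are immediate; in the general case one simply replaces $\sum_x$ by an integral against a dominating measure and repeats the same two lines.
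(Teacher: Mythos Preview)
Your proof is correct and is the standard two-line argument for this identity. The paper does not actually prove Lemma~\ref{lem:chisquare}: it is stated as a known auxiliary result with a citation to \citet{ingster2003nonparametric}, so there is no ``paper's own proof'' to compare against. Your derivation --- expand $\chi^2(P,Q)=\sum_x P(x)^2/Q(x)-1$, write the square of the mixture as an expectation over an independent copy, and swap sum and expectation by Tonelli --- is exactly the argument one finds in the cited reference.
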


\begin{lemma}\label{lem:chernoff}[Chernoff's bound for Binomial random variables {\cite[Exercise 4.7]{mitzenmacher2017probability}}]
If $X\sim \mathsf{Bin}(n, \frac{\lambda} {n})$,
then for any $\eta\in(0, 1)$, we have
\begin{align}
    \mathbb{P}(X\geq (1+\eta)\lambda) &\leq \left(\frac{e^\eta}{(1+\eta)^{(1+\eta)}}\right)^\lambda \label{eq.chernoff_upper}\\
    \mathbb{P}(X\leq (1-\eta)\lambda) & \leq \left(\frac{e^{-\eta}}{(1-\eta)^{(1-\eta)}}\right)^\lambda\leq e^{-\eta^2\lambda/2}. \label{eq.chernoff_lower}
\end{align}
\end{lemma}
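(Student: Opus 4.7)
The plan is to use the classical Chernoff (exponential moment) method, exploiting the fact that $X$ is a sum of $n$ i.i.d.\ Bernoulli$(\lambda/n)$ random variables. First I would write, for any $t>0$, the Markov-style bound
\begin{align*}
\mathbb{P}(X\geq (1+\eta)\lambda)\leq e^{-t(1+\eta)\lambda}\,\mathbb{E}[e^{tX}],
\end{align*}
and then compute the moment generating function via independence: $\mathbb{E}[e^{tX}]=(1-\tfrac{\lambda}{n}+\tfrac{\lambda}{n}e^{t})^{n}$. Using the elementary inequality $(1+x/n)^n\leq e^{x}$ with $x=\lambda(e^t-1)$, this is upper bounded by $\exp(\lambda(e^{t}-1))$, giving
\begin{align*}
\mathbb{P}(X\geq (1+\eta)\lambda)\leq \exp\bigl(\lambda(e^{t}-1)-t(1+\eta)\lambda\bigr).
\end{align*}
Minimizing the right-hand side in $t$ by setting the derivative to zero yields $t=\log(1+\eta)>0$, and substituting back produces exactly the stated upper-tail bound
\begin{align*}
\mathbb{P}(X\geq (1+\eta)\lambda)\leq \left(\frac{e^{\eta}}{(1+\eta)^{1+\eta}}\right)^{\lambda}.
\end{align*}

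For the lower tail I would mirror the argument with $t<0$: writing $\mathbb{P}(X\leq (1-\eta)\lambda)=\mathbb{P}(e^{tX}\geq e^{t(1-\eta)\lambda})$ and applying Markov gives, after the same MGF computation and the same inequality $(1+x/n)^n\leq e^x$, a bound of $\exp(\lambda(e^{t}-1)-t(1-\eta)\lambda)$. Optimizing over $t<0$ yields $t=\log(1-\eta)<0$, producing the first stated lower-tail inequality $\left(\frac{e^{-\eta}}{(1-\eta)^{1-\eta}}\right)^{\lambda}$. The remaining step is the simplification to $e^{-\eta^{2}\lambda/2}$, which reduces to showing the scalar inequality
\begin{align*}
-\eta-(1-\eta)\log(1-\eta)\leq -\frac{\eta^{2}}{2}\qquad\text{for }\eta\in(0,1).
\end{align*}
This follows by expanding $\log(1-\eta)=-\sum_{k\geq 1}\eta^{k}/k$ and bounding term by term, which I would present via Taylor's theorem with remainder; equivalently, differentiating $g(\eta)=(1-\eta)\log(1-\eta)+\eta-\eta^{2}/2$ twice and checking $g''\leq 0$ together with $g(0)=g'(0)=0$ shows $g(\eta)\leq 0$.

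The calculations are all routine; the only mild subtlety is being careful about the direction of the inequality when $t<0$ in the lower-tail Markov step, and ensuring that the elementary $(1+x/n)^n\leq e^x$ bound is applied with $x=\lambda(e^t-1)$ which can be negative (the inequality still holds for all real $x$). Given that this lemma is cited verbatim from \cite[Exercise 4.7]{mitzenmacher2017probability}, a short proof pointer or a two-line derivation along the lines above is all that is needed; the main obstacle, if any, is simply the final scalar inequality relating the exact Chernoff exponent to the clean Gaussian-like bound $\eta^{2}\lambda/2$.
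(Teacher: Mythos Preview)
The paper does not prove this lemma at all; it is simply cited from \cite[Exercise 4.7]{mitzenmacher2017probability} and stated without proof in the appendix of auxiliary lemmas. Your derivation is the standard Chernoff moment-generating-function argument and is correct in outline and in all the main steps.

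There is one small slip in the final scalar inequality. With $g(\eta)=(1-\eta)\log(1-\eta)+\eta-\eta^{2}/2$ you need $g(\eta)\geq 0$ (equivalently $-\eta-(1-\eta)\log(1-\eta)\leq -\eta^{2}/2$), not $g(\eta)\leq 0$; and in fact $g''(\eta)=\frac{1}{1-\eta}-1=\frac{\eta}{1-\eta}\geq 0$, not $\leq 0$. Since $g(0)=g'(0)=0$ and $g''\geq 0$ on $(0,1)$, convexity gives $g\geq 0$, which is exactly what is required. So the argument works once the two sign flips are corrected.
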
 

\section{Proof for Theorem~\ref{thm:max_adaptive_lower}}
\label{proof:max_adaptive}

\begin{proof}
We first select an arbitrary sequence $0<X_1<X_2<\cdots<X_K<1$. Let $\theta_0=(X_1,X_2,X_3,\cdots,X_K)$ be original sequence which has its largest value in the $K$-th element. Thus $\mathsf{MAX}(\theta_0)=K$. Now for any $i\in[K-1]$, we design $\theta_i = ( X_1,\cdots, X_{i-1}, X_K, X_i, X_{i+1}, \cdots, X_{K-1})$, i.e., we move the $K$-th element in $\theta_0$ and insert it between $X_{i-1}$ and $X_i$. Let $T_{i,j}$ be the random variable that represents the number of comparisons between the $i$-th item and $j$-th item in the $T$ rounds. 
We know that $\mathsf{MAX}(\theta_i)=i$ for all $i\in[K-1]$.
Following a similar proof as Theorem~\ref{thm:or_adaptive_lower}, we know that
\begin{align*}
      &\inf_{\hat\mu} \sup_{\theta\in\{0, 1\}^K} \mathbb{P}(\hat\mu \neq \mathsf{MAX}(\theta)) \\
      & \geq \sup_{1\leq j\leq K-1}\frac{1}{2}(1-\mathsf{TV}(\mathbb{P}_{\theta_0}, \mathbb{P}_{\theta_j}))  \\ 
      & \geq \sup_{1\leq j\leq K-1} \frac{1}{4}\exp(-D_{\mathsf{KL}}(\mathbb{P}_{\theta_0}, \mathbb{P}_{\theta_j})) \\
      & \geq \sup_{1\leq j\leq K-1} \frac{1}{4}\exp(-\sum_{l=j+1}^K \mathbb{E}_{\theta_0}[T_{j,l}] \cdot D_{\mathsf{KL}}(p\|1-p)).
\end{align*}Now since $\sum_{i,j\in[K], i< j} \mathbb{E}_{\theta_0}[T_{i,j}] = T$, there must exists some $j$ such that $ \sum_{l=j+1}^K \mathbb{E}_{\theta_0}[T_{j,l}]  \leq T/K$. This gives
\begin{align*}
      \inf_{\hat\mu} \sup_{\theta\in\{0, 1\}^K} \mathbb{P}(\hat\mu \neq \mathsf{MAX}(\theta)) &  \geq  \frac{1}{4}\cdot \exp\left(-\frac{T\cdot D_{\mathsf{KL}}(p\|1-p)}{K }\right).
\end{align*}

 On the other hand, $K$ is naturally a lower bound for the query complexity since one has to query each element at least once. Thus we arrive at a lower bound of $\Omega(\max(K,K\log(1/\delta)/D_{\mathsf{KL}}(p\|1-p)))$. Note that this is equivalent to $\Omega(K/(1-H(p))+K\log(1/\delta)/D_{\mathsf{KL}}(p\|1-p))$ up to a constant factor when $\delta<0.49$. The reason is that when $p$ is bounded away from $0$,  $(1-H(p))/D_{\mathsf{KL}}(p\|1-p)$ is always some constant. When $p$ is close to $0$, $1-H(p)$ is within constant factor of $1$. 
\end{proof}
\section{Proof for Theorem~\ref{thm:max_non-adaptive_lower}}
\label{proof:max_non-adaptive}

\begin{proof}
Consider an arbitrary sequence $0<X_1<X_2<\cdots<X_K<1$. 
Since $\sum_{i,j\in[K], i< j} \mathbb{E}_{\theta_0}[T_{i,j}] = T$, there must exists some pair $(i,j)$ such that $ \mathbb{E}_{\theta_0}[T_{i,j}]  \leq 2T/K(K-1)$.  
 Now we construct $$\theta_0=(X_1,X_2,X_3,\cdots,X_K, \cdots, X_{K-1},\cdots, X_{K-2}),$$ where $X_K$ lies in the $i$-th position and $X_{K-1}$ lies in the $j$-th position, and $$\theta_1=(X_1,X_2,X_3,\cdots,X_{K-1}, \cdots, X_{K},\cdots, X_{K-2}),$$ where $X_K$ lies in the $j$-th position and $X_{K-1}$ lies in the $i$-th position.   Thus $\mathsf{MAX}(\theta_0)=i$, $\mathsf{MAX}(\theta_1)=j$. 
From Le Cam's two point lemma, we know that
\begin{align*}&  \inf_{\hat\mu} \sup_{\theta\in\{0, 1\}^K} \mathbb{P}(\hat\mu \neq \mathsf{MAX}(\theta)) 
    \\  & \geq \frac{1}{2}(1-\mathsf{TV}(\mathbb{P}_{\theta_0}, \mathbb{P}_{\theta_1})) \\
      & \geq \frac{1}{4}\exp(-\mathbb{E}_{\theta_0}[T_{i,j}] \cdot D_{\mathsf{KL}}(p\|1-p))
     \\
     &   \geq  \frac{1}{4}\cdot \exp\left(-\frac{2T\cdot D_{\mathsf{KL}}(p\|1-p)}{K^2}\right).
\end{align*}
 On the other hand, $K^2$ is naturally a lower bound for the query complexity since one has to query each element at least once. Thus we arrive at a lower bound of $\Omega(\max(K^2,K^2\log(1/\delta)/D_{\mathsf{KL}}(p\|1-p)))$. Note that this is equivalent to $\Omega(K^2/(1-H(p))+K^2\log(1/\delta)/D_{\mathsf{KL}}(p\|1-p))$ up to a constant factor when $\delta<0.49$.
\end{proof}

\section{Proof for Theorem~\ref{thm:search_adaptive_lower}}
\label{proof:search_adaptive}
\begin{proof}
We begin with the first half, i.e.
\begin{align*}
 \inf_{\hat\mu} \sup_{X} \mathbb{P}(\hat\mu \neq \mathsf{SEARCH}(X)) 
    \geq   \frac{1}{4}\cdot \exp\left(-{T\cdot D_{\mathsf{KL}}(p\|1-p)}\right).
\end{align*} 
To see this, simply consider the case of $K=1$, and we need to determine whether $X<X_1$ or $X>X_1$ from their pairwise comparisons. We consider two instances $X^{(0)}, X^{(1)}$, where $X^{(0)}<X_1<X^{(1)}$. From Le Cam's lemma, we have
\begin{align*}
 \inf_{\hat\mu} \sup_{X} \mathbb{P}(\hat\mu \neq \mathsf{SEARCH}(\theta))& \geq \frac{1}{2}(1-\mathsf{TV}(\mathbb{P}_{X^{(0)}}, \mathbb{P}_{X^{(1)}})) \\
      & \geq \frac{1}{4}\exp(-T \cdot D_{\mathsf{KL}}(p\|1-p)).
\end{align*}
Next, we  aim to prove the second half, namely
\begin{align*}
 & \inf_{\hat\mu} \sup_{X} \mathbb{P}(\hat\mu \neq \mathsf{SEARCH}(X)) \\
    \geq & \frac{1}{2}\cdot \left(1-\frac{T\cdot (1-H(p))+\log(2)}{\log(K)}\right).
\end{align*} 
Now we design $K$ instances $X^{(0)},\cdots, X^{(K-1)}$.  We let $X^{(0)}<X_1$, and $X^{(l)}\in(X_l, X_{l+1})$. 
From Le Cam's lemma, we have 
\begin{align*}
     &  \inf_{\hat\mu} \sup_{X} \mathbb{P}(\hat\mu \neq \mathsf{SEARCH}(\theta))\\
      & \geq  \inf_{\hat\mu} \sup_{X\in\{X^l\}_{l\in[K]}} \mathbb{P}(\hat\mu \neq \mathsf{SEARCH}(\theta)) \\ 
        & \geq  \inf_{\Psi} \frac{1}{2K}\sum_{l\in[K]} \mathbb{P}_{X^{(l)}}(\Psi \neq l).
        \end{align*}
Now by Fano's inequality, we have
\begin{align*}
      & \inf_{\Psi} \frac{1}{2K}\sum_{l\in[K]} \mathbb{P}_{^{(l)}}(\Psi \neq l)\\
        & \geq  \frac{1}{2}\cdot \left(1-\frac{I(V;Y)+\log(2)}{\log(K)}\right).
\end{align*}
Here $V\sim \mathsf{Unif}(\{0, 1,\cdots,K-1\}))$ and $Y$ satisfies $\mathbb{P}_{Y|V=l}= \mathbb{P}_{X^{(l)}}$. Following the same argument as the proof of Theorem 3 in~\citet{wang2022noisy}, we know that  $I(V;Y)\leq T\cdot (1-H(p))$. Thus overall we have
\begin{align*}
 \inf_{\hat\mu} \sup_{X} \mathbb{P}(\hat\mu \neq \mathsf{SEARCH}(X)) 
    \geq \frac{1}{4}\cdot \left(1-\frac{T\cdot (1-H(p))}{\log(K)}\right).
\end{align*}

Thus the queries required to recover the true value with probability at least  $1-\delta$ is lower bounded by $\Omega(\log(K)/(1-H(p))+\log(1/\delta)/D_{\mathsf{KL}}(p\|1-p))$.     
\end{proof}
\section{Proof for Theorem~\ref{thm:search_non-adaptive_lower}}
\label{proof:search_non-adaptive}

\begin{proof}
Let $T_i$ be the random variable that denotes the number of times $X$ is compared with $X_i$. 
Since $\sum_{i\in[K]} \mathbb{E}[T_{i}] = T$, there must exists some $i$ such that $ \mathbb{E}[T_{i}]  \leq T/K$.  
 Now we construct the first instance $X^{(0)}\in(X_{i-1},X_i)$,  and the second instance $X^{(1)}\in(X_i, X_{i+1})$.  
From Le Cam's two point lemma, we know that
\begin{align*}
& \inf_{\hat\mu} \sup_{X} \mathbb{P}(\hat\mu \neq \mathsf{SEARCH}(X))
     \\ & \geq \frac{1}{2}(1-\mathsf{TV}(\mathbb{P}_{X^{(0)}}, \mathbb{P}_{X^{(1)}})) \\
      & \geq \frac{1}{4}\exp(-\mathbb{E}[T_{i}] \cdot D_{\mathsf{KL}}(p\|1-p))
     \\
     &   \geq  \frac{1}{4}\cdot \exp\left(-\frac{T\cdot D_{\mathsf{KL}}(p\|1-p)}{K}\right).
\end{align*}

 On the other hand, $K$ is naturally a lower bound for query complexity since one has to query each element at least once. Thus we arrive at a lower bound of $\Omega(\max(K,K\log(1/\delta)/D_{\mathsf{KL}}(p\|1-p)))$.
\end{proof}
\section{Proof for Theorem~\ref{thm:sort_adaptive_lower}}\label{proof:sort_adaptive}
\begin{proof}
From~\citep{isit2022paper}, it is proven with Fano's inequality that
\begin{align*}
    \inf_{\hat\mu} \sup_{\theta\in[0, 1]^K} \mathbb{P}(\hat\mu \neq \mathsf{SORT}(\theta))\geq  \frac{1}{2}\cdot \left(1-\frac{T\cdot (1-H(p))}{K\log(K)}\right).
\end{align*}
So it suffices to prove that \begin{align*}
    \inf_{\hat\mu} \sup_{\theta\in[0, 1]^K} \mathbb{P}(\hat\mu \neq \mathsf{SORT}(\theta))\geq  \frac{1}{4}\cdot \exp\left(-\frac{T\cdot D_{\mathsf{KL}}(p\|1-p)}{K}\right).
\end{align*}
To see this, consider an arbitrary sequence $0<X_1<X_2<\cdots<X_K<1$. Now we design $K$ instances $\theta_0,\cdots, \theta_{K-1}$.  We let $\theta_0 = (X_{1},\cdots, X_{K})$, and $\theta_l$ be the instance that switches the element $X_{l}$ with $X_{l+1}$, where $l\in[K]$. 
From Le Cam's two point lemma, we have  \begin{align*}&  \inf_{\hat\mu} \sup_{\theta\in[0, 1]^K} \mathbb{P}(\hat\mu \neq \mathsf{SORT}(\theta)) \\
  & \geq \sup_{1\leq j\leq K-1}\frac{1}{2}(1-\mathsf{TV}(\mathbb{P}_{\theta_0}, \mathbb{P}_{\theta_j}))  \\ 
      & \geq \sup_{1\leq j\leq K-1} \frac{1}{4}\exp(-D_{\mathsf{KL}}(\mathbb{P}_{\theta_0}, \mathbb{P}_{\theta_j})) \\
      & \geq \sup_{1\leq j\leq K-1} \frac{1}{4}\exp(-\mathbb{E}_{\theta_0}[T_{j,j+1}] \cdot D_{\mathsf{KL}}(p\|1-p)).
\end{align*}
Let $T_{i,j}$ be the random variable that represents the number of comparisons between the $i$-th item and $j$-th item in the $T$ rounds.  Now since $\sum_{1\leq j\leq K-1} \mathbb{E}_{\theta_0}[T_{j,j+1}] \leq T$, there must exists some $j$ such that $ \mathbb{E}_{\theta_0}[T_{j,j+1}]  \leq T/K$. This gives
\begin{align*}
      \inf_{\hat\mu} \sup_{\theta\in[0, 1]^K} \mathbb{P}(\hat\mu \neq \mathsf{SORT}(\theta))&  \geq  \frac{1}{4}\cdot \exp\left(-\frac{T\cdot D_{\mathsf{KL}}(p\|1-p)}{K }\right).
\end{align*}
Altogether, this shows a lower bound on the query complexity $\Omega(K\log(K)/(1-H(p))+K\log(1/\delta)/D_{\mathsf{KL}}(p\|1-p)))$. Note that this is equivalent to $\Omega(K\log(K)/(1-H(p))+K\log(K/\delta)/D_{\mathsf{KL}}(p\|1-p)))$ since $1/(1-H(p))\gtrsim 1/D_{\mathsf{KL}}(p\|1-p)$.
\end{proof}

\section{Proof for Theorem~\ref{thm:sort_non-adaptive_lower}}
\label{proof:sort_non-adaptive}

\begin{proof}
We first select an arbitrary sequence $0<X_1<X_2<\cdots<X_K<1$. Let $\sigma_i:[K]\mapsto [K]$  be the $i$-th permutation of the sequence, where $i\in[K!]$. Here $\sigma_i(k)$ refers to the $k$-th largest element under permutation $\sigma_i$. Now we consider a summation $  \sum_{i\in[K!],j\in[K-1]} \mathbb{E}[T_{\sigma_i(j), \sigma_i(j+1)}]$. For each pair $(i,j)$, $\mathbb{E}[T_{i,j}]$ is counted $2(K-1)!$ times in the summation. Furthermore, we know that $\sum_{i<j} \mathbb{E}[T_{i,j}] = T$. Thus we have  
\begin{align*}
    \sum_{i\in[K!],j\in[K-1]} \mathbb{E}[T_{\sigma_i(j), \sigma_i(j+1)}] = 2T(K-1)!
\end{align*}
We know that there must exists some $i$ such that 
\begin{align*}
    \sum_{j\in[K]} \mathbb{E}[T_{\sigma_i(j), \sigma_i(j+1)}]\leq \frac{2T(K-1)!}{K!}= \frac{2T}{K}.
\end{align*}
Now we design $K$ instances $\theta_0,\cdots, \theta_{K-1}$.  We let $\theta_0 = ( X_{\sigma_i(1)},\cdots, X_{\sigma_i(K)})$, and $\theta_l$ be the instance that switches the element $X_{\sigma_i(l)}$ with $X_{\sigma_i(l+1)}$ {based on $\theta_0$}, where {$l\in[K-1]$}. 
From Le Cam's two point lemma, we have 
\begin{align*}
     &  \inf_{\hat\mu} \sup_{\theta\in\{0, 1\}^K} \mathbb{P}(\hat\mu \neq \mathsf{SORT}(\theta))\\
      & \geq  \inf_{\hat\mu} \sup_{\theta\in\{\theta_l\}_{l\in[K]}} \mathbb{P}(\hat\mu \neq \mathsf{SORT}(\theta)) \\ 
        & \geq  \inf_{\Psi} \frac{1}{2K}\sum_{l\in[K]} \mathbb{P}_{\theta_l}(\Psi \neq l).
        \end{align*}
Now by Fano's inequality, we have
\begin{align*}
      & \inf_{\Psi} \frac{1}{2K}\sum_{l\in[K]} \mathbb{P}_{\theta_l}(\Psi \neq l)\\
        & \geq  \frac{1}{2}\cdot \left(1-\frac{I(V;X)+\log(2)}{\log(K)}\right) \\
      & =  \frac{1}{2}\cdot \left(1-\frac{\sum_{l\in[K]}D_{\mathsf{KL}}(\mathbb{P}^l, \bar{\mathbb{P}})/K+\log(2)}{\log(K)}\right).
\end{align*}
Here $\bar{\mathbb{P}}=\frac{1}{K} \sum_{l\in[K]}\mathbb{P}^l$. 
We can compute
\begin{align*}
   & D_{\mathsf{KL}}(\mathbb{P}^l, \bar{\mathbb{P}})\\ &\leq \mathbb{E}[T_{\sigma_i(l),\sigma_i(l+1)}] D_{\mathsf{KL}}\left(1-p\|\frac{(K-1)p+(1-p)}{K}\right) \\
    &  +\sum_{m\neq l}\mathbb{E}[T_{\sigma_i(m),\sigma_i(m+1)}]D_{\mathsf{KL}}\left(p\| \frac{(K-1)p+(1-p)}{K}\right) \\ 
    & \leq \frac{K-1}{K}\cdot \mathbb{E}[T_{\sigma_i(l),\sigma_i(l+1)}] D_{\mathsf{KL}}\left(p\|1-p\right) \\
    &  +\frac{1}{K}\sum_{m\neq l}\mathbb{E}[T_{\sigma_i(m),\sigma_i(m+1)}]D_{\mathsf{KL}}\left(p\| 1-p\right).
\end{align*}
Now summing over all $l$, we have 
\begin{align*}
    \frac{1}{K}\sum_{l\in[K]}D_{\mathsf{KL}}(\mathbb{P}^l, \bar{\mathbb{P}})& \leq \frac{2D_{\mathsf{KL}}\left(p\| 1-p\right)}{K} \cdot   \sum_{l\in[K]} \mathbb{E}[T_{\sigma_i(l), \sigma_i(l+1)}] \\ 
    & \leq  \frac{4TD_{\mathsf{KL}}\left(p\| 1-p\right)}{K^2}.
\end{align*}
  This gives
\begin{align*}
      \inf_{\hat\mu} \sup_{\theta\in\{0, 1\}^K} \mathbb{P}(\hat\mu \neq \mathsf{OR}(\theta)) &  \geq  \frac{1}{2}\cdot \left(1-\frac{4T\cdot D_{\mathsf{KL}}(p\|1-p)}{K^2\log(K)}\right).
\end{align*}

This shows that to output the correct answer with probability at least $2/3$, one needs at least $C \cdot K^2\log(K)/D_{\mathsf{KL}}(p\|1-p)$ queries for some universal constant $C$.

\end{proof}

\section{Upper Bounds for Non-adaptive Sampling} \label{app:non-adaptive_upper}

In this section, we present a theorem on the upper bounds for non-adaptive learning in the worst-case query model.
\begin{theorem}\label{thm:non-adaptive_upper}
    One can design algorithm such that the worst-case query complexity is 
    \begin{enumerate}
        \item $\mathcal{O}(\frac{K\log(K/\delta)}{1-H(p)})$ for  computing $\mathsf{OR}$;
        \item $\mathcal{O}(\frac{K^2\log(K/\delta)}{1-H(p)})$ for  computing $\mathsf{MAX}$;
        \item $\mathcal{O}(\frac{K\log(1/\delta)}{1-H(p)})$ for  computing $\mathsf{SEARCH}$;
        \item $\mathcal{O}(\frac{K^2\log(K/\delta)}{1-H(p)})$ for  computing $\mathsf{SORT}$;
        
    \end{enumerate}
\end{theorem}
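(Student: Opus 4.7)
The plan is to reduce each of the four problems to a collection of elementary noisy bit-recovery subproblems via uniform non-adaptive repetition, and handle each with a majority-vote (or maximum-likelihood) decoder backed by a Chernoff-type bound. The single-query primitive I will use is the standard fact that given $n$ i.i.d.\ observations from $\mathsf{BSC}(p)$ of an unknown bit, the majority-vote estimator errs with probability at most $2^{-n(1-H(p))}$ (from $\Prob(\mathsf{Bin}(n,p)\ge n/2)\le (4p(1-p))^{n/2}$, and $-\tfrac{1}{2}\log(4p(1-p))$ is within a constant of $1-H(p)$ for $p\in(0,1/2)$); hence $n=\Theta(\log(1/\delta')/(1-H(p)))$ samples suffice to recover a single bit with error at most $\delta'$.

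For OR, I would partition the query budget evenly into $K$ blocks of size $n=\Theta(\log(K/\delta)/(1-H(p)))$, query bit $k$ exactly $n$ times, majority-vote to obtain $\widehat{X}_k$, and output $\widehat{X}_1\lor\cdots\lor\widehat{X}_K$. A union bound over the $K$ per-bit error events gives overall error $\le \delta$, matching the claimed $\mathcal{O}(K\log(K/\delta)/(1-H(p)))$. For MAX and SORT, the same uniform allocation is applied to all $\binom{K}{2}$ pairs with $n=\Theta(\log(K/\delta)/(1-H(p)))$ queries per pair (using $\log\binom{K}{2}\asymp \log K$). A union bound over pairs shows every pairwise majority is correct with probability $\ge 1-\delta$, and conditional on this event MAX is the unique element that beats every other one and SORT is the unique permutation consistent with the recovered comparisons, giving $\mathcal{O}(K^2\log(K/\delta)/(1-H(p)))$ in both cases.

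For SEARCH, the naive union bound over the $K$ comparisons of $X$ against $X_1,\ldots,X_K$ would yield only $\mathcal{O}(K\log(K/\delta)/(1-H(p)))$. To remove the extra $\log K$ factor and achieve $\mathcal{O}(K\log(1/\delta)/(1-H(p)))$, I would replace per-pair majority voting with a joint maximum-likelihood decoder over the $K+1$ hypotheses $\{i:X_i<X<X_{i+1}\}$ after spending $n/K$ queries on each comparison. The key observation is that under hypothesis $i$ the comparison $X$ vs $X_k$ yields ``$X>X_k$'' iff $k\le i$, so hypotheses $i$ and $j$ differ in exactly $|i-j|$ of the $K$ comparisons. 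A Chernoff bound applied to each pair of hypotheses then shows the ML decoder prefers a position at distance $d$ from the truth with probability at most $\exp(-c\, d\,(n/K)(1-H(p)))$; summing the resulting geometric series over $d\ge 1$ gives overall error $\lesssim \exp(-c(n/K)(1-H(p)))$, and setting this $\le \delta$ requires only $n\gtrsim K\log(1/\delta)/(1-H(p))$.

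The main obstacle is precisely this refinement for SEARCH: one has to exploit the monotone Hamming-distance structure between the $K+1$ hypotheses and aggregate the pairwise large-deviation bounds via a geometric series rather than a crude union bound over all $K$ comparisons. The other three problems reduce cleanly to majority vote plus union bound, with only routine bookkeeping needed to convert a per-pair error budget of $\delta/K$ or $\delta/K^2$ into the claimed total, and to verify that the Chernoff exponent $-\tfrac12\log(4p(1-p))$ and the capacity gap $1-H(p)$ agree up to a universal constant.
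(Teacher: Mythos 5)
Your proposal is correct and follows essentially the same route as the paper: uniform repetition plus majority vote and a union bound for $\mathsf{OR}$, $\mathsf{MAX}$, and $\mathsf{SORT}$, and for $\mathsf{SEARCH}$ a joint (maximum-likelihood-type) decoder over the $K$ positions whose error against a hypothesis at distance $d$ is controlled by a Chernoff bound on a Binomial with $d\lfloor T/K\rfloor$ trials, then summed as a geometric series in $d$. This is exactly the paper's argument (the paper's $\hat l = \argmax_l \sum_{i\le l} N_i + \sum_{i>l}(T_i-N_i)$ is the ML decoder you describe, and its per-distance bound $\delta^{\,|j-l|}$ is your geometric series), so no substantive differences to report.
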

The results for  $\mathsf{OR}$,  $\mathsf{MAX}$ and $\mathsf{SORT}$ are based the simple algorithms of querying all possible elements equal number of times. And the analysis is a direct union bound argument. Here we only present  the algorithm and analysis for  $\mathsf{SEARCH}$.
\begin{proof}
    Assume that the target $X$ lies between $X_l$ and $X_{l+1}$. 
Consider the non-adaptive learning algorithm which compares $X$ with each element  $X_i$ for $T_i = \lfloor T/K \rfloor  = 4\log(1/\delta)/(1-H(p))$ times. 

Let $N_i$ be the number of observations of $1$ among $T_i$ queries for element $X_i$. Consider the following algorithm:
\begin{align*}
    \hat l = \argmax_{l\in[K]} \sum_{i=1}^l N_i +\sum_{i=l+1}^K (T_i-N_i). 
\end{align*}
We show that with high probability, $\hat l = l$. We have
\begin{align*}
    \mathbb{P}(\hat l \neq l) & \leq \sum_{j\neq l} \mathbb{P}(\hat l = j)\\
    & \leq \sum_{j\neq l} \mathbb{P}\Big(\sum_{i=1}^l N_i +\sum_{i=l+1}^K (T_i-N_i)   - (\sum_{i=1}^j N_i +\sum_{i=j+1}^K (T_i-N_i))<0\Big).
\end{align*}
Now we bound the above probability for each $j$. Without loss of generality, we assume that $j>l$. The above probability can be written as
\begin{align*}
& \mathbb{P}\Big(\sum_{i=1}^l N_i +\sum_{i=l+1}^K (T_i-N_i) < \sum_{i=1}^j N_i +\sum_{i=j+1}^K (T_i-N_i)\Big) \\
 =    & \mathbb{P}\Big(\sum_{i=l+1}^j (T_i-2N_i)<0\Big) \\
 = &  \mathbb{P}\Big(\sum_{i=l+1}^j N_i  > \frac{1}{2} (j-l) \lfloor T/K \rfloor \Big) 
\end{align*}
Note that $\sum_{i=l+1}^j N_i \sim \mathsf{Bin}((j-l) \lfloor T/K \rfloor ,  p)$. Let $n = \frac{1}{2} (j-l) \lfloor T/K \rfloor $. From Lemma~\ref{lem:chernoff} we have
\begin{align*}
  \mathbb{P}(\sum_{i=l+1}^j N_i\geq n/2)  &\leq \left(\frac{e^\frac{1-2p}{2p}}{(1/2p)^{(1/2p)}}\right)^{np}  = ( 2p\cdot \exp(1-2p))^{n/2} \\ 
    &< \exp\left(\log(1/\delta)\cdot {\frac{2(j-l)(1-2p+\log(2p))}{(1-2p)^2}}\right)   < \delta^{j-l}.
\end{align*}
Now by summing over the probability for different $j's$, we get
\begin{align*}
    \mathbb{P}(\hat l \neq l) \leq \sum_{j\neq l} \delta^{|j-l|
}< \frac{2\delta}{1-\delta}.
\end{align*}
Rescaling $\delta$ finishes the proof. 
\end{proof}

\section{Upper Bounds for Adaptive Sampling}\label{app:upper_or}
Here we present the tournament algorithm for computing \orn\! introduced in~\citep{feige1994computing}. Similar algorithm can also be applied to compute \maxn\!.  The main difference is that in $\mathsf{MAX}$, we directly compare two elements $\lceil\frac{4(2i-1)\log(1/\delta)}{(1-H(p))}\rceil$ times instead of comparing their number of $1'$s.

\begin{algorithm}[!htbp]
\caption{Tournament for computing \orn with noise}
\label{alg:tournament}
  \begin{algorithmic}[1]
\State  \textbf{Input}: Target confidence level $\delta$.
  \State Set $\mathcal{X} = (X_1,X_2,\cdots, X_K)$ as the list of all bits with unknown value.
\For{iteration $i= 1:\lceil\log_2(K)\rceil $}
\State Query  $\lceil\frac{4(2i-1)\log(1/\delta)}{(1-H(p))}\rceil$ times each of the element in $\mathcal{X}$.
 \For{iteration $j = 1:\lceil |\mathcal{X}|/2\rceil $}
     \State Compare the number of $1$'s in the queries from the $(2j-1)$-th element and $(2j)$-th element, remove the element with smaller number of $1$'s from the list $\mathcal{X}$. 
     Ties are broken arbitrarily. (If the $(2j)$-th element does not exist, we will not remove the $(2j-1)$-th element.)
      \EndFor
      \State Break when $\mathcal{X}$ only has one element left. 
      \EndFor
\State Query  $\lceil \frac{6\log(1/\delta)}{(1-H(p))}\rceil $ times the only left element in $\mathcal{X}$. Return $1$ if there is more than half $1$'s, and $0$ otherwise.
\end{algorithmic}
  \end{algorithm}
  The following theorem is due to~\citep{feige1994computing}. We include it here for completeness.
\begin{theorem}\label{thm:or_upper}
Algorithm~\ref{alg:tournament} finishes within $C\cdot K\log(1/\delta)/(1-H(p))$ queries, and outputs the correct value of $\mathsf{OR}(X_1,\cdots,X_K)$ with probability at least $1-2\delta$ when $\delta<1/2$.
\end{theorem}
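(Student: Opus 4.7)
The plan is to split the analysis into a query-count bound and a success-probability bound, and to reduce the correctness argument to a single pairwise Chernoff-type estimate. For the query count, iteration $i$ of Algorithm~\ref{alg:tournament} examines at most $K/2^{i-1}$ surviving elements and probes each $\mathcal{O}((2i-1)\log(1/\delta)/(1-H(p)))$ times, so the total over iterations telescopes via $\sum_{i\geq 1}(2i-1)/2^{i-1}=\mathcal{O}(1)$, yielding $\mathcal{O}(K\log(1/\delta)/(1-H(p)))$; the final majority vote adds only another $\mathcal{O}(\log(1/\delta)/(1-H(p)))$, which is absorbed.

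For correctness I would first isolate the key pairwise lemma: if two elements with true values $1$ and $0$ are each queried $n$ times through $\mathsf{BSC}(p)$, then the probability that the element of value $0$ ends up with strictly more observed $1$'s is at most $(2\sqrt{p(1-p)})^n$, and this quantity is at most $\exp(-c\,n(1-H(p)))$ for an absolute constant $c>0$. The same exponent governs the single-element majority vote used at the end. Choosing $n_i = \lceil 4(2i-1)\log(1/\delta)/(1-H(p))\rceil$ then yields a per-comparison error of at most $\delta^{2i-1}$ once the constant $c$ is absorbed into the $4$.

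The correctness argument splits on the value of $\mathsf{OR}$. If $\mathsf{OR}=0$, every element is $0$, so no tournament round can hurt us (the survivor is still $0$), and the final majority vote errs with probability at most $\delta$. If $\mathsf{OR}=1$, I fix an arbitrary $X_\star$ of value $1$ and trace its ``lineage'' through the bracket: the current representative advances whenever it is paired with another $1$ (whichever element wins still has value $1$) and fails only when it is paired with a $0$ and loses the noisy count comparison. The lineage thus needs to win at most one pairwise test per round, so a union bound over the $\leq\lceil\log_2 K\rceil$ rounds bounds the failure probability by $\sum_{i\geq 1}\delta^{2i-1}\leq 2\delta/(1-\delta^2)$, and the final majority vote adds at most another $\delta$; a mild constant rescaling gives the claimed $2\delta$.

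The main technical obstacle is the single-comparison exponent, namely verifying $2\sqrt{p(1-p)} \leq \exp(-c(1-H(p)))$ for a constant $c>0$ uniform in $p \in (0,1/2)$. Near $p=1/2$, both $1-2\sqrt{p(1-p)}$ and $1-H(p)$ vanish quadratically in $(1-2p)$, so the ratio $-\log(2\sqrt{p(1-p)})/(1-H(p))$ tends to a finite nonzero limit; near $p=0$, $-\log(2\sqrt{p(1-p)})$ diverges while $1-H(p)$ stays bounded, so the inequality is trivial. A standard calculus check that this ratio is bounded below on $(0,1/2]$ fixes the constant $c$; once this estimate is in place, the remainder of the proof is the clean combinatorial accounting described above.
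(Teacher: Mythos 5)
Your proof is correct and follows essentially the same route as the paper's: the same telescoping query count, the same split into the all-zero case (handled by the final majority vote) and the case with a true $1$ (tracking the lineage of that $1$ through the bracket and union-bounding over the $\lceil\log_2 K\rceil$ rounds), with your Bhattacharyya-type estimate $(2\sqrt{p(1-p)})^{n}\leq e^{-c\,n(1-H(p))}$ playing exactly the role of the paper's explicit binomial Chernoff computation via Lemma~\ref{lem:chernoff} and the fact $(1/2-p)^2/(1-H(p))\in[1/4,1/2]$. The only loose point is the final accounting: your conservative per-round error $\delta^{2i-1}$ plus the last majority vote gives roughly $(7/3)\delta$ rather than $2\delta$, but since $-\ln\bigl(2\sqrt{p(1-p)}\bigr)/(1-H(p))\geq \ln 2>1/2$ for all $p\in(0,1/2)$ (the infimum being approached as $p\to 1/2$), your own lemma already yields the per-round bound $\delta^{2(2i-1)}$ used in the paper, whose sum $\delta^2/(1-\delta^2)$ plus the final $\delta$ gives the stated $1-2\delta$ with no rescaling (and the pairwise bound should be applied to the non-strict event ``at least as many $1$'s,'' since ties may be broken against the surviving $1$; the same estimate covers it).
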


\begin{proof}
First, we compute the total number of queries of the algorithm. Without loss of generality, we may assume that $K$ can be written as $2^m$ for some integer $m$. If not we may add no more than $K$ extra dummy $0$'s to the original list $\mathcal{X}$ to make sure $K=2^m$. In each of the outer iteration $i$, the size of $\mathcal{X}$ is decreased half. We know that after $\lceil \log_2(K)\rceil$ round, the set $\mathcal{X}$ will only contain one element. In round $i$, the number of queries we make for each element is $\lceil\frac{4(2i-1)\log(1/\delta)}{(1-2p)^2}\rceil$. The total number of queries we make is
\begin{align*}
   & \left \lceil\frac{4\log(1/\delta)}{(1-2p)^2}\right\rceil + \sum_{i=1}^{\lceil \log_2(K)\rceil}\left\lceil\frac{4(2i-1)\log(1/\delta)}{(1-2p)^2}\right\rceil \cdot \frac{K}{2^{i-1}} \\
  \leq   &\left \lceil\frac{4\log(1/\delta)}{(1-2p)^2}\right\rceil + \sum_{i=1}^{\lceil \log_2(K)\rceil}\left(\frac{4(2i-1)\log(1/\delta)}{(1-2p)^2}+1\right) \cdot \frac{K}{2^{i-1}} \\
    \leq & \left \lceil\frac{4\log(1/\delta)}{(1-2p)^2}\right\rceil + 2K + \frac{K\log(1/\delta)}{(1-2p)^2} \sum_{i=1}^{\lceil \log_2(K)\rceil}\frac{4(2i-1)}{2^{i-1}} \\
   \leq &  \left \lceil\frac{4\log(1/\delta)}{(1-2p)^2}\right\rceil + 2K + \frac{28 K\log(1/\delta)}{(1-2p)^2} \\ 
   \leq &  \frac{ C K\log(1/\delta)}{1-H(p)} 
\end{align*}
Here in last inequality we use the fact below, which can be verified numerically: 
\begin{align}
\forall p\in[0, 1],    \frac{(1/2-p)^2}{1-H(p)} \in [1/4,1/2]. 
\end{align}

Now we show that the failure probability of the algorithm is at most $\delta$. Consider the first case where all the elements are $0$. Then no matter which element is left in $\mathcal{X}$, the probability that the algorithm fails is the probability that a Binomial random variable $X\sim B(n, np)$ has value larger or equal to $n/2$ with $n=\left\lceil\frac{4\log(1/\delta)}{(1-2p)^2}\right\rceil$.


Taking $\lambda = np$, $\eta = \frac{1-2p}{2p}$ in Equation (\ref{eq.chernoff_upper}) of Lemma~\ref{lem:chernoff}, we know that
\begin{align*}
    \mathbb{P}(X\geq n/2) &\leq \left(\frac{e^\frac{1-2p}{2p}}{(1/2p)^{(1/2p)}}\right)^{np} \\ 
    & = ( 2p\cdot \exp(1-2p))^{n/2} \\ 
    &< \exp\left(\log(1/\delta)\cdot {\frac{2(1-2p+\log(2p))}{(1-2p)^2}}\right) \\
    & < \delta.
\end{align*}
Here the last inequality uses the fact that ${\frac{(1-2p+\log(2p))}{(1-2p)^2}}<-1/2$ for all $p\in[0,1/2)$. This shows that the final failure probability is bounded by $\delta$ when the true elements are all $0$.

Consider the second case where there exists at least a $1$ in the original elements $X_1,\cdots, X_K$. Without loss of generality, we assume that $X_1=1$. Let $\mathcal{X}^i$ be the remaining list of elements at the beginning of $i$-th iteration. We let $\mathcal{E}_i$ be the event that the first element in $\mathcal{X}^i$ is $1$ while the first element in $\mathcal{X}^{i+1}$ is $0$. This event only happens when the second element in $\mathcal{X}^i$ is $0$ and gets more $1$'s  in the noisy queries than the first element. Let $\mathcal{A}$ denote the event that the only left element is $1$ in the last round,  we have 
\begin{align*}
    \mathbb{P}(\mathcal{A}) & \geq  1 - \mathbb{P}\left(\bigcup_{i=1}^{\lceil \log_2(K)\rceil} \mathcal{E}_i\right)     \\
    & \geq 1-\sum_{i=1}^{\lceil \log_2(K)\rceil} \mathbb{P}(\mathcal{E}_i) \\ 
    & \geq  1-\sum_{i=1}^{\lceil \log_2(K)\rceil} \mathbb{P}(Y_i - X_i\geq 0),
\end{align*}
where $X_i\sim B(n_i, n_i(1-p))$ and $Y_i\sim B(n_i, n_ip)$, with $n_i = \lceil\frac{4(2i-1)\log(1/\delta)}{(1-2p)^2}\rceil$. Let $Z_i = Y_i-X_i + n_i$, we know that  the random variable $Z_i\sim B(2n_i, n_ip)$. Thus we have
\begin{align*}
      \mathbb{P}(\mathcal{A}) & \geq 1-\sum_{i=1}^{\lceil \log_2(K)\rceil} \mathbb{P}(Z_i\geq n_i) \\ 
      & >1-\sum_{i=1}^{\lceil \log_2(K)\rceil} \delta^{2(2i-1)} \\ 
      & > 1- \frac{\delta^2}{1-\delta^2}.
\end{align*}
Following the same argument on Binomial distribution, we can upper bound the error probability under event $\mathcal{A}$ with $\delta$. Thus the total failure probability is upper bounded by $\delta + \delta^2/(1-\delta^2)<2\delta$ when $\delta<1/2$.

\end{proof}
\section{Proof of Theorem~\ref{thm:variable_length}}\label{proof:variable_length}

\subsection{Lower Bounds}
First, note that all our  lower bounds for fixed-length can be adapted to variable-length by replacing $T$ with $\mathbb{E}[T]$. The bound of mutual information in Fano's inequality in Section~\ref{proof:search_adaptive} can be proven using the same argument as Lemma 27 in~\citet{gu2023optimal}, and the divergence decomposition lemma (Lemma~\ref{lem:div}) still holds for variable length due to Lemma 15 in \citet{kaufmann2016complexity}.

\subsection{Upper Bounds}

Now it suffices to prove the upper bounds. For $\mathsf{OR}$ and $\mathsf{MAX}$, we already know from Theorem~\ref{thm:or_upper} that $\mathcal{O}(K\log(1/\delta)/(1-H(p)))$ is an upper bound for fixed-length setting when comparing two elements with error probability at most $\delta$ requires $\lceil {C\log(1/\delta)}/{(1-H(p))}\rceil $ samples for some constant $C$. Now for variable-length setting, we know  that comparing two elements with error probability at most $\delta$ only requires  $\log(1/\delta)/D_{\mathsf{KL}}(p\|1-p)+1/(1-2p)$ queries in expectation via the variable-length comparison algorithm in Lemma 13 of~\citet{gu2023optimal}.  Thus in Algorithm~\ref{alg:tournament_variable}, we  replace the repetition-based comparisons in Algorithm~\ref{alg:tournament} with the new variable-length comparison algorithm. This  gives the query complexity $\mathcal{O}(K/(1-H(p))+K\log(1/\delta)/D_{\mathsf{KL}}(p\|1-p))$.
Similar algorithm can also be applied to compute \maxn\!.  The main difference is that in $\mathsf{MAX}$, we directly compare two elements instead of finding the number of $1'$s of the first element.

\begin{theorem}\label{thm:or_upper_variable}
The expected number of total queries made by Algorithm~\ref{alg:tournament_variable} is upper bounded by $C \cdot \left(\frac{K}{1-H(p)}+\frac{K\log(1/\delta)}{D_{\mathsf{KL}}(p\|1-p)}\right)$. Furthermore, the algorithm outputs the correct value of $\mathsf{OR}(X_1,\cdots,X_K)$ with probability at least $1-2\delta$ when $\delta<1/2$.
\end{theorem}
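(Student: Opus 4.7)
The plan is to mirror the analysis of Theorem~\ref{thm:or_upper} for the fixed-length tournament, replacing the per-round Chernoff bound by the variable-length sequential test guarantee of Lemma~13 in~\citet{gu2023optimal}. The inner while-loop in Algorithm~\ref{alg:tournament_variable} is exactly the Bayesian sequential probability ratio test on a single bit, stopped when the posterior exits $(\tilde\delta_i, 1-\tilde\delta_i)$. That lemma gives two facts I will use as black boxes: the test outputs the wrong label with probability at most $\tilde\delta_i$, and under either hypothesis its expected running time is $O(\log(1/\tilde\delta_i)/D_{\mathsf{KL}}(p\|1-p) + 1/(1-2p))$.

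For correctness I would split into two cases. If the hidden vector is $\bm{0}$, every round can err arbitrarily since each surviving element is still $0$, and only the final single-element test matters; that test errs with probability at most $\delta$. Otherwise fix a position whose value is $1$ and track its survival. In round $i$ this element lies in some pair $(2j-1,2j)$; the $1$ is eliminated only under one of two events for that pair, either the $1$ is the queried slot and the test outputs $0$, or the $1$ is the un-queried slot and the test wrongly declares the queried $0$ to be $1$. Each event has probability at most $\tilde\delta_i = \delta^{2(2i-1)}$, so a union bound over $i=1,\dots,\lceil\log_2 K\rceil$ contributes at most $\sum_{i\ge 1}\delta^{2(2i-1)} = \delta^2/(1-\delta^4) \le 2\delta^2$ for $\delta<1/2$. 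Combined with the final $\delta$-failure this gives total error at most $\delta + 2\delta^2 \le 2\delta$.

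For the expected query count, round $i$ invokes the inner test on the $K/2^i$ odd-indexed slots at confidence $\tilde\delta_i$, so the expected work in that round is
\[
  O\!\left(\frac{K}{2^i}\left(\frac{(2i-1)\log(1/\delta)}{D_{\mathsf{KL}}(p\|1-p)} + \frac{1}{1-2p}\right)\right).
\]
Summing over $i=1,\dots,\lceil\log_2 K\rceil$, together with the convergent series $\sum_{i\ge 1}(2i-1)/2^i = O(1)$ and $\sum_{i\ge 1} 1/2^i \le 1$, yields a total of $O\bigl(K\log(1/\delta)/D_{\mathsf{KL}}(p\|1-p) + K/(1-2p)\bigr)$; the terminal single-bit test contributes only lower-order terms. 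To reach the stated form I use the concavity inequality $H(p) \ge 2p$ on $[0,1/2]$, which gives $1-H(p) \le 1-2p$ and hence $K/(1-2p) \le K/(1-H(p))$, matching the bound in the theorem.

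The principal obstacle, and the step that makes this analysis more delicate than its fixed-length counterpart, is establishing the expected-runtime bound for the inner sequential test; the fixed-length Chernoff tail bound does not yield it directly. The cleanest resolution is to cite~\citet[Lemma~13]{gu2023optimal}, whose proof combines optional stopping of the log-likelihood-ratio martingale with an $O(1/(1-2p))$ correction for the boundary overshoot. A self-contained derivation would instead use Wald's identity $\mathbb{E}[\log\mathrm{LR}_T] = \mathbb{E}[T]\cdot D_{\mathsf{KL}}(p\|1-p)$ and a direct estimate of the overshoot, which is the one technical point distinguishing the variable-length argument from the fixed-length tournament analysis.
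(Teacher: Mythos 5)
Your proposal is correct and follows essentially the same route as the paper's proof: invoke the variable-length comparison guarantee of \citet{gu2023optimal} as a black box for both the per-test error probability $\tilde\delta_i$ and the expected stopping time $\mathcal{O}(\log(1/\tilde\delta_i)/D_{\mathsf{KL}}(p\|1-p)+1/(1-2p))$, union-bound the survival of a $1$-element over rounds using $\sum_i \delta^{2(2i-1)}$, and sum the geometric schedule to get $\mathcal{O}(K/(1-2p)+K\log(1/\delta)/D_{\mathsf{KL}}(p\|1-p))$. The only cosmetic difference is the final conversion of $1/(1-2p)$ into $1/(1-H(p))$, where you use the concavity bound $H(p)\geq 2p$ on $[0,1/2]$ while the paper uses the numerically verified fact $(1/2-p)^2/(1-H(p))\in[1/4,1/2]$; both suffice.
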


\begin{proof}
First, we compute the total number of queries of the algorithm. Without loss of generality, we may assume that $K$ can be written as $2^m$ for some integer $m$. If not we may add no more than $K$ extra dummy $0$'s to the original list $\mathcal{X}$ to make sure $K=2^m$. In each of the outer iteration $i$, the size of $\mathcal{X}$ is decreased half. We know that after $\lceil \log_2(K)\rceil$ iterations, the set $\mathcal{X}$ will only contain one element. From Lemma 27 in~\citet{gu2023optimal}, the expected number of queries we make is $\mathcal{O}(\log(1/\tilde\delta_i)/D_{\mathsf{KL}}(p\|1-p)+1/(1-2p))$ at round $i$. Thus the total number of queries we make is
\begin{align*}
   & \frac{\log(1/ \delta )}{D_{\mathsf{KL}}(p\|1-p)}+\frac{1}{1-2p} + C\cdot\sum_{i=1}^{\lceil \log_2(K)\rceil}\left(\frac{\log(1/\tilde\delta_i)}{D_{\mathsf{KL}}(p\|1-p)}+\frac{1}{1-2p}\right)\cdot \frac{K}{2^{i-1}} \\
  \leq   & C\cdot \sum_{i=1}^{\lceil \log_2(K)\rceil}\left(\frac{(4i-2)\log(1/\delta)}{D_{\mathsf{KL}}(p\|1-p)}+\frac{1}{1-2p}\right)\cdot \frac{K}{2^{i-1}} \\
    \leq & C \cdot\left( \frac{K}{1-2p} + \frac{K\log(1/\delta)}{D_{\mathsf{KL}}(p\|1-p)}\cdot \sum_{i=1}^{\lceil \log_2(K)\rceil}\frac{4(2i-1)}{2^{i-1}} \right)\\
   \leq &   C \cdot\left( \frac{K}{1-2p} + \frac{K\log(1/\delta)}{D_{\mathsf{KL}}(p\|1-p)} \right)\\ 
   \leq & C \cdot\left( \frac{K}{1-H(p)} + \frac{K\log(1/\delta)}{D_{\mathsf{KL}}(p\|1-p)} \right).
\end{align*}
Here in last inequality we use the fact below, which can be verified numerically: 
\begin{align}
\forall p\in[0, 1],    \frac{(1/2-p)^2}{1-H(p)} \in [1/4,1/2]. 
\end{align}

Now we show that the failure probability of the algorithm is at most $\delta$. Consider the first case where all the elements are $0$. Then no matter which element is left in $\mathcal{X}$, the probability that the algorithm fails is the probability that the last while loop gives wrong output. From Lemma 27 in~\citet{gu2023optimal}, we know that such probability is less than $\delta$. 

Consider the second case where there exists at least a $1$ in the original elements $X_1,\cdots, X_K$. Without loss of generality, we assume that $X_1=1$. Let $\mathcal{X}^i$ be the remaining list of elements at the beginning of $i$-th iteration. We let $\mathcal{E}_i$ be the event that the first element in $\mathcal{X}^i$ is $1$ while the first element in $\mathcal{X}^{i+1}$ is $0$. This event only happens when the while loop ends with $a<\tilde\delta_i$ for the first element at the $i$-th round. Let $\mathcal{A}$ denote the event that the only left element is $1$ in the last round,  we have 
\begin{align*}
    \mathbb{P}(\mathcal{A}) & \geq  1 - \mathbb{P}\left(\bigcup_{i=1}^{\lceil \log_2(K)\rceil} \mathcal{E}_i\right)     \\
    & \geq 1-\sum_{i=1}^{\lceil \log_2(K)\rceil} \mathbb{P}(\mathcal{E}_i) \\  
    & >1-\sum_{i=1}^{\lceil \log_2(K)\rceil} \delta^{2(2i-1)} \\ 
    & > 1- \frac{\delta^2}{1-\delta^2}.
\end{align*}
We can upper bound the error probability under event $\mathcal{A}$ with $\delta$. Thus the total failure probability is upper bounded by $\delta + \delta^2/(1-\delta^2)<2\delta$ when $\delta<1/2$.

\end{proof}

The  upper bound for $\mathsf{SEARCH}$ is due to~\citet{gu2023optimal}, and is thus omitted here. The upper bound for $\mathsf{SORT}$  is based on 
that for $\mathsf{SEARCH}$. We can design an insertion-based sorting algorithm by adding elements sequentially to an initially empty sorted set via noisy searching, as in Algorithm 1 in~\citep{isit2022paper}.  Since the insertion step requires $\mathcal{O}(\log(K)/(1-H(p))+\log(1/\delta)/D_{\mathsf{KL}}(p\|1-p))$ queries. Overall we know that one needs $\mathcal{O}(\sum_{k=1}^K (\log(k)/(1-H(p))+\log(1/\delta)/D_{\mathsf{KL}}(p\|1-p)))$ $=\mathcal{O}(K\log(K)/(1-H(p))+K\log(1/\delta)/D_{\mathsf{KL}}(p\|1-p))$ queries to achieve error probability at most $K\delta$. Rescaling $\delta$ gives the final rate. 
\end{document}